\documentclass[a4paper,onecolumn,11pt,accepted=2023-10-03]{quantumarticle}
\pdfoutput=1

\usepackage[utf8]{inputenc}
\usepackage[english]{babel}
\usepackage[T1]{fontenc}
\usepackage{amsmath}
\usepackage{hyperref}

\usepackage{tikz}
\usepackage{lipsum}
\usepackage[numbers,sort&compress]{natbib}

\usepackage{amsfonts}
\usepackage{amsmath}
\usepackage{amsthm}
\usepackage{amssymb}
\usepackage{bbm}
\usepackage{soul}

\usepackage{graphicx}
\usepackage{subcaption}
\usepackage{lipsum}

\usepackage{url}
\usepackage{booktabs}
\usepackage{nicefrac}
\usepackage{microtype}

\usepackage{algorithm}
\usepackage{algorithmic}

\usepackage{array, makecell}
\usepackage{booktabs}
\usepackage{multirow}

\newcommand{\EE}{\mathrm{E}}

\newcommand{\braket}[2]{\langle #1|#2\rangle}
\newcommand{\bra}[1]{\langle #1|}
\newcommand{\ket}[1]{|#1\rangle}
\newcommand{\ketbra}[1]{|#1\rangle\langle#1|}

\newtheorem{theorem}{Theorem}
\newtheorem{definition}{Definition}
\newtheorem{proposition}{Proposition}

\begin{document}

\title{Quantum Deep Hedging}

\author{El Amine Cherrat}
\email{el.amine-cherrat@qcware.com}
\affiliation{QC Ware}
\affiliation{Université de Paris, CNRS, IRIF}
\orcid{0009-0006-8682-3287}

\author{Snehal Raj}
\affiliation{QC Ware}
\orcid{0000-0002-4063-3190}

\author{Iordanis Kerenidis}
\affiliation{QC Ware}
\affiliation{Université de Paris, CNRS, IRIF}
\orcid{0000-0003-0659-3727}

\author{Abhishek Shekhar}
\affiliation{Quantitative Research, JPMorgan Chase}

\author{Ben Wood}
\affiliation{Quantitative Research, JPMorgan Chase}

\author{Jon Dee}
\affiliation{Quantitative Research, JPMorgan Chase}

\author{Shouvanik Chakrabarti}
\affiliation{Global Technology Applied Research, JPMorgan Chase}
\orcid{0000-0002-9159-4881}

\author{Richard Chen}
\affiliation{Global Technology Applied Research, JPMorgan Chase}
\orcid{0000-0002-5912-5620}

\author{Dylan Herman}
\affiliation{Global Technology Applied Research, JPMorgan Chase}
\orcid{0000-0002-8721-7848}

\author{Shaohan Hu}
\affiliation{Global Technology Applied Research, JPMorgan Chase}
\orcid{0000-0002-2877-2665}

\author{Pierre Minssen}
\affiliation{Global Technology Applied Research, JPMorgan Chase}
\orcid{0000-0003-0361-6962}

\author{Ruslan Shaydulin}
\affiliation{Global Technology Applied Research, JPMorgan Chase}
\orcid{0000-0002-8657-2848}

\author{Yue Sun}
\email{yue.sun@jpmorgan.com}
\affiliation{Global Technology Applied Research, JPMorgan Chase}
\orcid{0000-0002-0756-164X}

\author{Romina Yalovetzky}
\affiliation{Global Technology Applied Research, JPMorgan Chase}
\orcid{0000-0001-8397-2072}

\author{Marco Pistoia}
\affiliation{Global Technology Applied Research, JPMorgan Chase}
\orcid{0000-0001-9002-1128}

\maketitle

\begin{abstract}
    Quantum machine learning has the potential for a transformative impact across industry sectors and in particular in finance. In our work we look at the problem of hedging where deep reinforcement learning offers a powerful framework for real markets. We develop quantum reinforcement learning methods based on policy-search and distributional actor-critic algorithms that use quantum neural network architectures with orthogonal and compound layers for the policy and value functions. We prove that the quantum neural networks we use are trainable, and we perform extensive simulations that show that quantum models can reduce the number of trainable parameters while achieving comparable performance and that the distributional approach obtains better performance than other standard approaches, both classical and quantum. We successfully implement the proposed models on a trapped-ion quantum processor, utilizing circuits with up to $16$ qubits, and observe performance that agrees well with noiseless simulation. Our quantum techniques are general and can be applied to other reinforcement learning problems beyond hedging.
\end{abstract}

\section{Introduction}

In financial markets, hedging is the important activity of trading with the aim of reducing risk. For example, buyers and sellers of derivative contracts will often trade the asset underlying the derivative in order to mitigate the risk of adverse price movements. Classical financial mathematics provides optimal hedging strategies for derivatives in idealized friction-less markets, but for real markets these strategies must be adapted to take into account transaction costs, market impact, limited liquidity, and other constraints. 

Finding optimal hedging strategies in the presence of these important real-world effects is highly challenging. Deep Hedging~\cite{buehler_deep_2019, buehler_deep_2019-1} is a framework for the application of modern reinforcement learning techniques to solve this problem. One starts by defining a reinforcement learning environment for the hedging problem and a trading goal of maximizing a risk-adjusted measure of cumulative future returns. Then, one can apply standard deep reinforcement learning algorithms, such as  policy-search or actor-critic approaches, by designing neural network architectures to model the trading strategy and by defining a training loss function to find the optimal parameters that maximize the trading goal.

Beyond Deep Hedging, the applicability of machine learning to finance has grown significantly in recent years as highly efficient machine learning algorithms have evolved over time to support different data types and scale to larger data sets. For instance, supervised learning can be used for asset pricing or portfolio optimization~\cite{gu_empirical_2018, choi_stock_2018}, unsupervised learning for portfolio risk analysis and stock selection~\cite{zhu_pagan_2020, zhang_stock_2019}, and reinforcement learning for algorithmic trading~\cite{cartea_deep_2021, deng_deep_2017}. At the same time, machine learning has been identified as one of the most important domains of applicability of quantum computing given the potential ability of quantum computers to solve classically-intractable computational problems~\cite{liu_rigorous_2021}, perform linear algebraic operations efficiently~\cite{chakraborty_power_2019}, compute gradients~\cite{gilyn_optimizing_2019} and provide variational-type approaches~\cite{cerezo_variational_2021}. 
Such techniques have already been considered for financial use cases~\cite{kerenidis_quantum_2019, leclerc_financial_2022, emmanoulopoulos_quantum_2022, rebentrost_quantum_2022, doriguello_quantum_2022,Niroula2022}, and in fact finance is estimated to be one of the first industry sectors to benefit from quantum computing~\cite{menard2020game,herman2022}.

 In this work, we develop quantum deep learning methods and show how they can be a powerful tool for Deep Hedging. Quantum deep learning methods, and, in particular, quantum neural networks based on parametrized quantum circuits, have been proposed as a way to enhance the power of classical deep learning~\cite{cerezo_variational_2021}. Such quantum neural networks may in general be difficult to train, often encountering problems of barren plateaus or vanishing gradients~\cite{mcclean_barren_2018}. Here, we design quantum neural layers based on Hamming-weight preserving unitaries constructed out of 2-dimensional rotation gates (called RBS gates), and prove that they can be efficiently trainable, in the sense that the variance of the gradients decays only polynomially with the number of qubits. These quantum layers, first defined in~\cite{kerenidis_classical_2022}, naturally provide models that have orthogonal features, improving interpretability~\cite{yang_enhancing_2021}, allowing for deeper architectures, and providing theoretical and practical benefits in  generalization~\cite{li_orthogonal_2021}. Depending on the input data encoding, one can control the size of the Hilbert space these neural networks are exploring while training. We call these two types of layers \emph{orthogonal layers} and \emph{compound layers} respectively. 
 
 First, using our orthogonal layers within classical neural network architectures, we design novel quantum neural networks for time-series. To evaluate the behavior of our quantum neural networks, we use the same example as in~\cite{buehler_deep_2019}, where the market was simulated using Geometric Brownian Motion (GBM) with a single hedging instrument (equity). We then benchmark four different neural network architectures (Feed-forward, Recurrent, LSTM, Transformer), with both classical layers and our quantum layers, using the policy-search Deep Hedging algorithm~\cite{buehler_deep_2019, buehler_deep_2019-1}. Our quantum neural networks achieve comparable scores as their classical counterparts while obtaining qualitatively different solutions, providing models with orthogonal features and considerably fewer parameters. It is conceivable that similar parameter reduction may be obtained by purely classical techniques such as pruning.
 
 Second, we design a novel quantum-native reinforcement learning method for Deep Hedging. We start by formulating a quantum encoding of the environment and trading goals for Deep Hedging. We then introduce a distributional actor-critic reinforcement learning algorithm in combination with quantum neural network architectures based on compound layers for the policy and value function. Our approach is inspired by classical distributional reinforcement learning wherein the critic does not only learn the expectation of cumulative returns, but also approximates their distribution. Recent studies, such as AlphaTensor~\cite{fawzi_discovering_2022}, have demonstrated that distributional reinforcement learning can lead to better models compared to standard reinforcement learning methods, despite the increased difficulty in training~\cite{lyle_comparative_2019}. A distributional actor-critic method for Deep Hedging had not been studied in the classical case before. We note that our quantum distributional reinforcement learning algorithms can be used more generally for reinforcement learning problems and is not limited to Deep Hedging. 

Quantum computers are naturally suited to distributional reinforcement learning. Each quantum circuit explicitly encodes a mapping between exponential size distributions, and the measurement of a quantum circuit results in a sample from such a distribution. These samples can be used to simply learn the expectation of the entire distribution or to flexibly obtain extra information about the distribution, such as the expectations restricted to relevant subsets of the total range. In the case of Deep Hedging, we parametrize the value function using quantum neural networks with compound layers, which preserve the Hamming-weight subspaces of their input domain. This choice is particularly suited to the Deep Hedging setting, where when we encode a stochastic path as a binary string of up or down jumps, then it is intuitively natural that the number of net \emph{jumps}, namely the Hamming weight of the encoding, is a major component that determines its behavior. The restriction to compound layers further makes the neural network architecture shallower and trainable.  

Confirming this intuition, the quantum policies trained using our distributional actor-critic algorithm outperform those trained with policy-search based or standard actor-critic models where only the expectation of the value function is learned. These results are achieved again for a variant of the example from~\cite{buehler_deep_2019}, where we used a discretized Geometric Brownian Motion as a market model both without and with transaction costs. 

Last, we evaluated our framework on
the Quantinuum H1-1 and H1-2 trapped-ion quantum processors~\cite{h1}. In particular, we performed inference on the quantum hardware using two sets of Quantum Deep Hedging models which were classically pre-trained. First, we used the policy-search based algorithm with the LSTM and Transformer architectures instantiated with 16-qubit orthogonal layers. Second, we used the novel distributional actor-critic algorithm instantiated with compound neural networks using up to 12 qubits. We observed close alignment between noiseless simulation and hardware experiments, with our distributional actor-critic models again providing best performance. {We note that some of the circuits used to instantiate our framework may be classically simulatable with only a polynomial overhead (including the setup used in our numerical experiments)~\cite{Brod2016,2308.01432}. Nevertheless, this does not hold true for the Quantum Deep Hedging framework, which is more general and can be applied with any quantum layers. For example, it can be shown that with suitable input states which are still very easy to create, circuits used in our work become classically hard to simulate ~\cite{oszmaniec_fermion_2022}}.

The rest of the paper is organized as follows. Section~\ref{sec:preliminaries} introduces preliminaries for quantum computing and reinforcement learning. In Section~\ref{sec:deep-hedging}, the problem of Deep Hedging is formulated and policy-search and actor-critic algorithms are presented. Section~\ref{sec:quantum-NNs} presents our orthogonal and compound neural networks and proves their trainability. Section~\ref{sec:quantum-deep-hedging} introduces a novel quantum framework for reinforcement learning and applies it to Deep Hedging. Section~\ref{sec:applications} reports our simulation and hardware implementation results. Finally, Section~\ref{sec:conclusions} concludes with remarks and open questions.

\section{Preliminaries}
\label{sec:preliminaries}

\subsection{Quantum Computing}
\label{subsec:quantum-computing}

Quantum computing~\cite{nielsen_quantum_2012} is a new paradigm for computing that uses the postulates of quantum mechanics to perform computation. The basic unit of information in quantum computing is a qubit. The state of a qubit can be written as: 
$$\ket{x} = \alpha_0 \ket{0} + \alpha_1\ket{1},$$ 
where $\alpha_0,\alpha_1 \in \mathbb{C}$ and $|\alpha_0|^2+|\alpha_1|^2=1$, and corresponds to a unit vector in the Hilbert space $\mathcal{H}=\text{span}_{\mathbb{C}}\{\ket{0},\ket{1}\}$. 
A qubit can be generalized to $n$-qubit states, which are represented by unit vectors in $\mathcal{H}^{\otimes n}\simeq\mathbb{C}^{2^n}$. 
Specifically,
$$\ket{x} = \sum_{b\in\{0,1\}^n} \alpha_b \ket{b},$$ 
with $\alpha_b=\braket{x}{b} \in \mathbb{C}$ and $\sum_{b\in\{0,1\}^n} |\alpha_b|^2=1$, where $\{\ket{b}\:|\: b\in\{0,1\}^n\}$ is the computational basis of $\mathcal{H}^{\otimes n}$. Quantum states evolve through the application of unitary operators, which are $2^n \times 2^n$ unitary matrices. Applying a unitary operator $U$ to an $n$-qubit state $\ket{x}$ results in a new quantum state $U\ket{x}$.

Quantum states can be measured, and the measurement process reveals information about the state of the system. The probability of a quantum state $\ket{x}$ yielding outcome $b$ from a measurement in the computational basis is given by $|\alpha_b|^2$.
More generally, the measurement process is described as an observable, which is a Hermitian operator that acts on the quantum state. The observable is given by $O = \sum_m \text{o}_m P_m$, where $\text{o}_m$ are real numbers that specify the measurement outcomes and $P_m$ are projection operators onto the subspaces that correspond to each outcome.  This can be calculated as the inner product between the state and the corresponding projection operator, or $p_m(x) = \braket{x}{P_m|x}$. The expectation of measuring the observable $O$ in the state $\ket{x}$ is defined as the sum of the measurement outcomes weighted by their corresponding probabilities, or $\sum_m \text{o}_m p_m(x)$. This expectation can also be written as the trace of the observable $O$ and the density matrix $\rho(x) = \ket{x}\bra{x}$, i.e., $\braket{x}{O|x} = \mathrm{Tr}\left[O\rho(x)\right]$, where $\mathrm{Tr}$ is the trace operator. 
The trace operation returns the sum of the diagonal elements of the matrix, which corresponds to the expected value of the measurement outcome in the state $\ket{x}$.
 
\subsection{Reinforcement Learning}
\label{subsec:reinforcement-learning}

The aim of reinforcement learning~\cite{sutton_reinforcement_1998} is to train an agent to discover the policy that maximizes the agent's performance in terms of cumulative future reward.
And while interacting with the environment, the agent only receives a reward signal. 
The agent can take an action from a set of possible actions based on a policy that maps each state to an action.

Environments in reinforcement learning are modeled as decision making problems defined by specifying the state set, the action set, the underlying model describing the dynamics of the environment, and the reward mechanism.
The usual framework used to describe the environment's elements in reinforcement learning are \emph{Markov Decision Processes} (MDPs). In this paper, we will consider finite-horizon MDPs that can be defined as follows:

\begin{definition}[Finite-horizon MDP]
A finite-horizon MDP $\mathcal{M}$ is defined by 
a tuple $(\mathcal{S},\mathcal{A},p,r,T)$, where $\mathcal{S}$ is is the state space, $\mathcal{A}$ is the action space, $p:\mathcal{S}\times\mathcal{A}\xrightarrow[]{}\Delta(\mathcal{S})$ is the transition function with $\Delta(\mathcal{S})$ the set of distributions over $\mathcal{S}$, $r:\mathcal{S}\times\mathcal{A}\xrightarrow[]{}\mathbb{R}$ is the reward function and $T\in\mathbb{N}^*$ is the time horizon.
\end{definition}

Starting from a state $s_t \in \mathcal{S}$, a single interaction with the environment can be represented by a sequence of actions $\{a^\pi_{t'}\}_{t'=t}^{T}$ selected based on a deterministic policy $\pi: \mathcal{S} \to \mathcal{A}$, and a sequence of random states $\{s_{t'}\}_{t'=t}^{T}$ that follow the MDP transitions $p$. The cumulative return $R^\pi_t$ is the sum of rewards from time-step $t$ to $T$ and is given by
$$ R^\pi_t(s_t, s_{t+1}, \dots, s_T) = \sum_{t'=t}^T r(s_{t'},a_{t'}^\pi). $$ 

In reinforcement learning, the objective is to find the policy $\pi^*$ that maximizes the expected return for all states $s_t$. The expected value of the return, taking into consideration all possible future states $\{s_{t'}\}_{t'=t}^{T}$ resulting from the environment transitions described by $p$, is referred to as the value function $v^\pi$. For any time-step $t$, and denoting by $\boldsymbol{s}_{t'}\in\Delta(\mathcal{S})$ the random variable that takes values in $\mathcal{S}$ according to the environment dynamics for $t'> t$ and knowing $\boldsymbol{s}_t=s_t$, the cumulative return is defined as follows: $$ v_t^\pi(s_t) = \mathbb{E}[R_t^\pi(\boldsymbol{s}_t,\boldsymbol{s}_{t+1},\dots,\boldsymbol{s}_{T}) | s_t]. $$ 

Typically, the goal of reinforcement learning is to find a policy that maximizes the expected return, and different algorithms have been developed to achieve such objectives~\cite{arulkumaran_deep_2017}. The value function is used to evaluate policies in order to find the one that maximizes the expected return.

\section{Deep Hedging Formulation}
\label{sec:deep-hedging}

\subsection{Deep Hedging Environment}
\label{subsec:deep-hedging-environment}

Deep Hedging is a classical algorithm that treats hedging of a set of derivatives as a reinforcement learning problem. This algorithm was first introduced in~\cite{buehler_deep_2019,buehler_deep_2019-1} and has been further developed in subsequent works such as~\cite{wiese_deep_2019, buehler_deep_2022, wiese_multi-asset_2021, murray_deep_2022}. In the original approach, the authors use a reinforcement learning environment associated with Deep Hedging that employs finite-horizon MDPs where there is a different state and action space per time-step. The time horizon $T$ represents the maximum maturity of all instruments, and $\mathcal{S}_t$ and $\mathcal{A}_t$ are the sets of observed market states and available actions at time-step $t$, respectively.

During the interaction with the environment, the agent observes a market state $s_t\in\mathcal{S}_t$ that contains all current and past market information (prices, cost estimates, news, internal state of neural networks, $\dots$), and takes an action $$a_t^\pi = \pi_t(s_t) \in \mathcal{A}_t,$$ potentially subject to constraints (liquidity limits, risk limits, $\dots$), according to a deterministic policy $\pi:=\{\pi_t\}_{t=0}^T$. 
Then, the environment transitions to the next state $s_{t+1}$, according to $p_t:\mathcal{S}_t \xrightarrow[]{} \Delta(\mathcal{S}_{t+1})$, that is assumed not to depend on the action $a_t^\pi$ since actions have no market impact in the Deep Hedging model~\cite{murray_deep_2022}:
$$p_t(s_{t+1}|s_t):= \mathbb{P}[s_{t+1}|s_t].$$
Subsequently, the agent receives a total reward of
$$r_t^\pi(s_t):=r_t(s_t,a^\pi_t) = r^+_t - r^-_t,$$ 
where $r^+_t$ is the source of positive rewards such as the generated cashflows and $r^-_t$ represents the source of negative rewards and corresponds to the transaction costs. 
The interaction ends after a terminal state $s_T\in\mathcal{S}_T$ is reached.
The cumulative sum of the rewards perceived during this interaction can be rewritten as 
$$R_t^\pi(s_T)\equiv R_t^\pi(s_t,s_{t+1},\dots, s_T),$$ 
since, by definition, $s_T$ contains all previous states $s_{t'}$ for all $t \leq t'<T$. 

\subsection{Trading Goals in Deep Hedging}
\label{subsec:deep-hedging-trading-goal}

The standard objective in reinforcement learning problems is to find the optimal strategy $\pi^*$ that maximizes the value function $v^\pi$ over all policies $\pi$.
As discussed in Section~\ref{subsec:reinforcement-learning}, the value function is usually defined as the expected cumulative return, which, in this context, would be $\mathbb{E}[R^\pi_t(\boldsymbol{s}_T)|s_t]$. However, in order to take into account the inherent risk in trading strategies, the goal of Deep Hedging is to find a deterministic optimal policy $\pi^*$ that maximizes the value function defined, for some policy $\pi$, as
\[
    v_t^\pi(s_t) = \EE \left[ R_t^\pi(\boldsymbol{s}_T)|s_t \right],
\]
where $\EE$ is the expected utility defined by a risk-averse (i.e., concave) utility function. 

Various forms of utility functions have been proposed that satisfy the concave requirement.
For a more detailed discussion on the desired properties of the utility function and examples of commonly used forms, see~\cite{buehler_deep_2019, buehler_deep_2019-1}. 
In this paper, we use the exponential utility function $\EE_\lambda$, which is an example of a monetary utility function that is increasing, concave, and cash-invariant~\cite{murray_deep_2022}. Specifically, it is defined for some risk aversion level $\lambda > 0$ as
$$ \EE_\lambda[\boldsymbol{X}] := -\frac{1}{\lambda} \log \mathbb{E} [\exp(-\lambda \boldsymbol{X})]. $$
The parameter $\lambda$ can be used to reflect the investor's risk tolerance, with larger values indicating more risk aversion. With this exponential utility function, the value function $v^*$ associated with the optimal policy $\pi^*$ is given by $$ \forall t, \: \forall s_t \in \mathcal{S}_t, \quad v^*_t(s_t) = \sup_{\pi} v_t^\pi(s_t) = -\frac{1}{\lambda}\log \left\{\inf_{\pi}  \mathbb{E}\left[\exp \left(-\lambda R_t^\pi\left((\boldsymbol{s}_T\right) \right)|s_t\right]\right\}. $$

Since the Deep Hedging objective is formulated in terms of risk-adjusted measures, the value function is no longer a solution to the standard Bellman equation. However, conventional reinforcement learning algorithms can be adapted to find policies that maximize the utility. Two algorithms have been developed to solve the Deep Hedging problem. The first approach, called policy-search Deep Hedging~\cite{buehler_deep_2019, buehler_deep_2019-1}, uses a neural network to model the policy and updates the set of parameters using gradient descent to minimize the policy loss function. The second approach, actor-critic Deep Hedging~\cite{murray_deep_2022}, represents both the policy and the value function with neural networks and computes the utility using the policy function to update the value network, which is then used to update the policy network.

\section{Quantum Neural Networks with Orthogonal and Compound Layers}
\label{sec:quantum-NNs}

A Quantum Neural Network (QNN) consists of a composition of parametrized unitary operations, whose parameters can be trained to provide machine learning models for classification or regression tasks. While current quantum hardware is still far from being powerful enough to compete with classical machine learning algorithms, many interesting quantum machine learning algorithms have started to appear, such as for regression~\cite{mitarai_quantum_2018, herman_expressivity_2022}, classification~\cite{farhi_classification_2020, perez-salinas_data_2020, landman_quantum_2022}, generative modeling~\cite{benedetti_generative_2019, benedetti_variational_2021}, and reinforcement learning~\cite{meyer_survey_2022}.

In general, a QNN consists of data-loading layers and trainable layers, which are both parametrized unitary operations. In some architectures, the data-loading is an explicit-encoding scheme that is used to directly embed the classical input data into the amplitudes of a quantum state. While in others, these parts only implicitly encode the data and prepare a state whose amplitudes are some complex non-linear function of the input data. The latter is known in literature as a quantum feature map~\cite{havlicek_supervised_2019}. After the unitary operators are applied, the resulting quantum state is probed to produce classical data that can be used for inference or training.

One popular approach, based on variational quantum circuits~\cite{cerezo_variational_2021}, is to apply an alternating sequence of quantum feature maps and trainable parts and output the expectation of the resulting state with respect to some observable~\cite{schuld_effect_2021, gil_vidal_input_2020}. A second class of architectures encodes in the amplitudes the input data and performs trainable unitary operations that reproduce the linear layers of certain classical neural networks but with reduced computational complexity~\cite{kerenidis_classical_2022, cherrat_quantum_2022}. The output is obtained through quantum-state tomography and non-linear operations are then applied classically. After applying the non-linearity, the data is reloaded onto a quantum state, and the process is repeated to compose layers. 

Such quantum circuits can be trained using classical gradient descent methods until convergence. For variational quantum circuits, where the output is an expectation value, the gradient can be computed using the parameter-shift rule~\cite{mitarai_quantum_2018, schuld_evaluating_2019}. One needs to be very careful in designing such quantum neural networks, since they may in general be difficult to train, often encountering problems, such as, barren plateaus or vanishing gradients~\cite{mcclean_barren_2018}. 

In Sections~\ref{subsec:quantum-analogues-of-classical-nns}~and~\ref{sec:compound_layer} below we review two different types of quantum layers built from Hamming-weight preserving unitaries and that can be used to provide a natural quantization of classical neural network architectures. 
In Section~\ref{subsec:qnn-archs-for-timeseries} we discuss neural networks architectures that make use of these layers. While similar techniques have appeared previously in~\cite{cherrat_quantum_2022, landman_quantum_2022}, our discussion is more systematic and extends the techniques to a larger set of architectures. 
Finally, in Section~\ref{subsec:quantum-compound-nns} we discuss the properties of quantum neural networks with compound layers and prove their trainability.

\subsection{Quantum Orthogonal Layers}
\label{subsec:quantum-analogues-of-classical-nns}

The quantum orthogonal layer was proposed by Kerenidis et al.~\cite{kerenidis_classical_2022} to simulate traditional orthogonal layers with reduced complexity at inference time. Specifically, a quantum orthogonal layer on $n$-qubits acts as an element of $\text{SO}(n)$ when restricted to the span of the computational basis states with Hamming-weight one, i.e. the unary basis. This is achieved by composing two-qubit Reconfigurable Beamsplitter (RBS) gates. An RBS gate acting on the $i$-th and $j$-th qubits implements a Givens rotation:
$$ \text{RBS}_{ij}(\theta)  = \begin{pmatrix} 1 & 0 & 0 & 0 \\
     0 & \cos(\theta) & \sin(\theta) & 0 \\ 
     0 & - \sin(\theta) & \cos(\theta) & 0 \\
      0 & 0 & 0 & 1 \end{pmatrix}.
$$

If the goal is to apply an orthogonal matrix to classical data in a vector $\boldsymbol{x} \in \mathbb{R}^{n}$, then $\boldsymbol{x}$ can be efficiently amplitude encoded in a quantum state in the unary basis with a log-depth circuit~\cite{kerenidis_method_2020, johri_nearest_2021}. The unary data loader (depicted in Figure~\ref{fig:parallel_loader}) uses $n$ qubits and maps the all-zeros basis state $\ket{0}^{\otimes n}$ to the state $|\boldsymbol{x}\rangle$ as follows:
$$ U_L(\boldsymbol{x}) : \ket{0}^{\otimes n} \rightarrow |\boldsymbol{x}\rangle = \frac{1}{\| \boldsymbol{x} \|} \sum^n_{i=1} x_i \ket{e_i},$$
where $\| \cdot \|$ represents the $\ell_2$ norm and $\ket{e_i}$ is the $i^{th}$ unary basis quantum state represented by $|0\rangle^{\otimes(i-1)}|1\rangle|0\rangle^{\otimes(n-i)}$. 

Let $G(i, j, \theta)$ denote the Givens rotation applied to the $i$-th and $j$-th unary basis vector, i.e. $e_i$ and $e_j$, $\boldsymbol{\theta}$ a vector of angles, and $\mathcal{T}$ is a list of triplets $(i, j, m)$. The orthogonal layer is defined by
$$U(\boldsymbol{\theta}) = \prod_{(i, j, m) \in \mathcal{T}}\text{RBS}_{ij}(\theta_{m}).$$ 
It acts as $ U(\boldsymbol{\theta})\ket{\boldsymbol{x}} = W\ket{\boldsymbol{x}}$, where $W =  \prod_{(i, j, m) \in \mathcal{T}}G(i, j, \theta_{m})$. Since the dimension of the Hamming-weight one subspace is $n$ for $n$ qubits, there exist efficient quantum-state tomography procedures for reading out the resulting quantum state encoding the matrix-vector product $W\ket{\boldsymbol{x}}$~\cite{kerenidis_classical_2022}.

The fact that circuits of $\text{RBS}$ can only span elements of $\text{SO}(n)$ avoids the computational overhead that is associated with the need to re-orthogonalize the weight matrix in the classical case. This can be implemented with a linear-depth quantum circuit. Note that the application of each such layer can also be performed classically in time $\mathcal{O}(n^2)$. Furthermore, orthogonal layers are efficiently trainable, as the dimension of the space they explore is linear in the number of qubits used. Specifically, these layers are trained by classically simulating the circuit and using quantum for inference.

\begin{figure}[t!]
    \centering
    \includegraphics[height=12em]{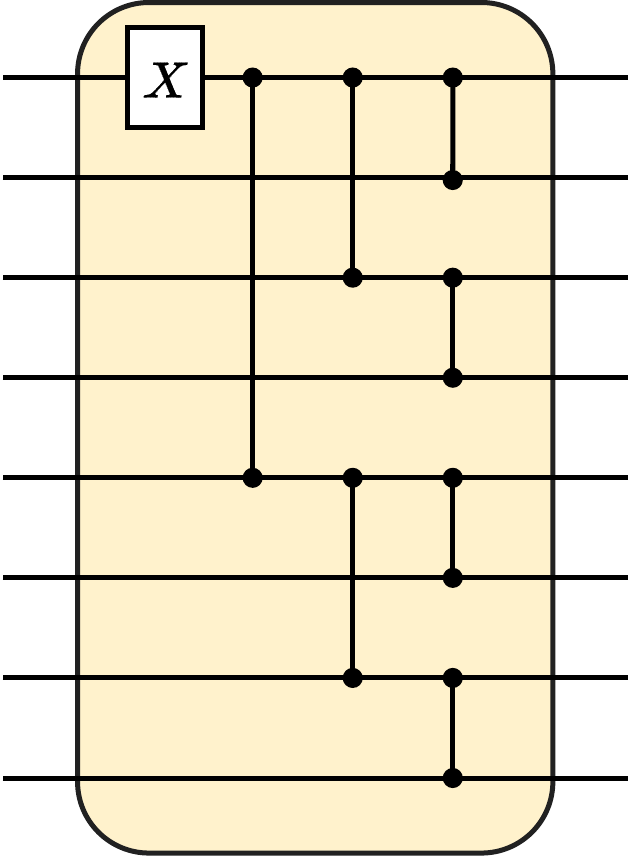}
    \caption{A quantum circuit with logarithmic depth for data loading. Vertical lines represent RBS gates with parameters that are dependent on the input $\boldsymbol{x}$. The unitary represented by this data loader is denoted as $U_{L}(\boldsymbol{x})$.}
    \label{fig:parallel_loader} 
\end{figure}

\begin{figure}[t!]
    \centering
    \begin{subfigure}{0.39\textwidth}
    \centering
    \includegraphics[height=12em]{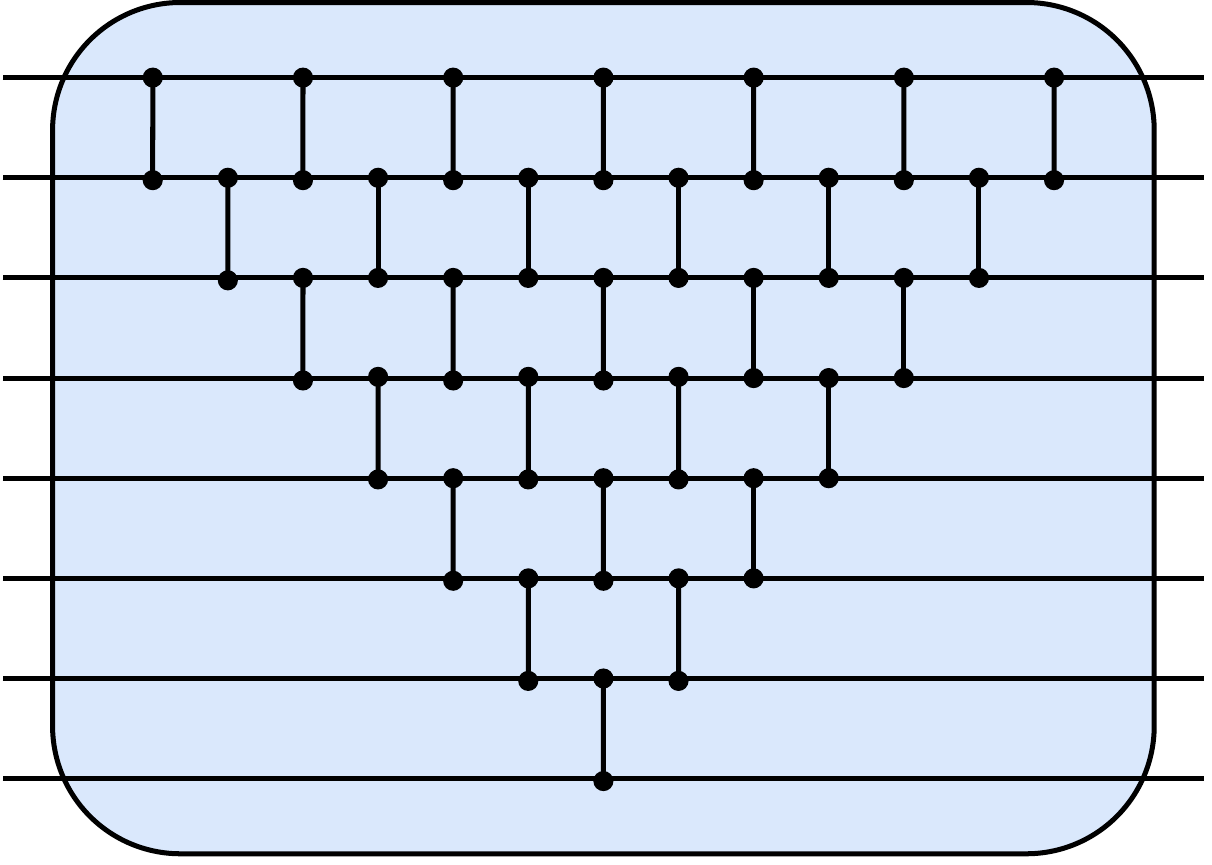} 
    \caption{Pyramid architecture}
    \end{subfigure}
    \begin{subfigure}{0.29\linewidth}
    \centering
    \includegraphics[height=12em]{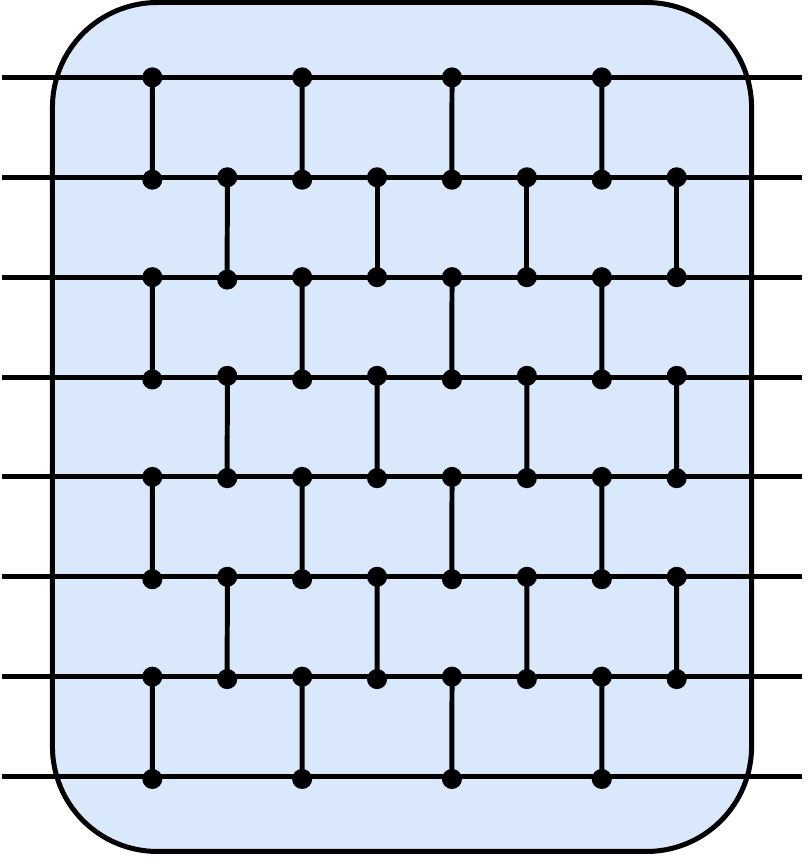}
    \caption{Brick architecture}
    \end{subfigure}
    \begin{subfigure}{0.29\linewidth}
    \centering
    \includegraphics[height=12em]{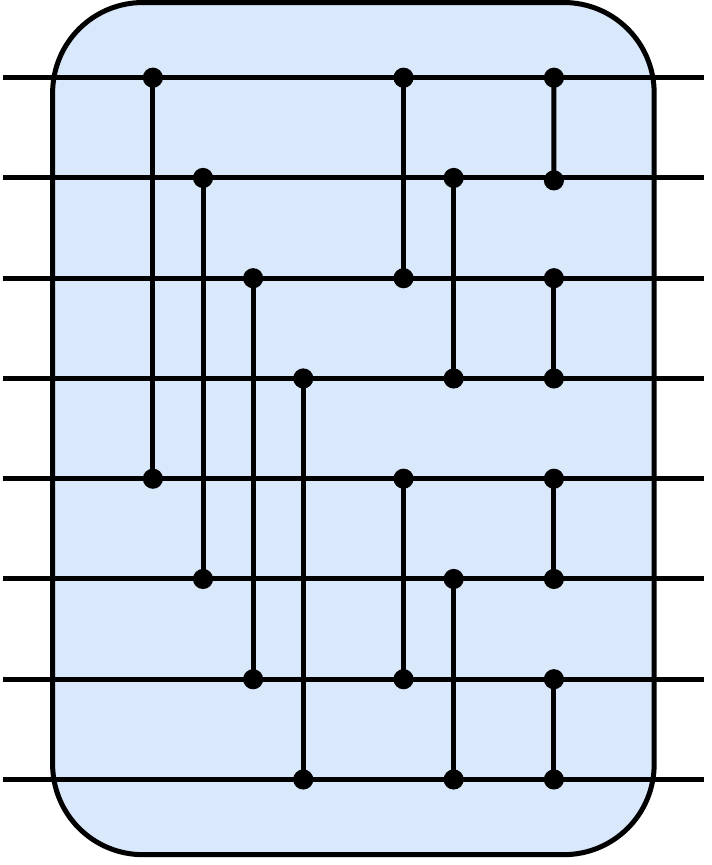} 
    \caption{Butterfly architecture}
    \end{subfigure}
    \caption{Various Hamming-weight preserving circuits used in quantum orthogonal layers. These circuits are parameterized by a set of parameters $\boldsymbol{\theta}$, with each parameter representing the angle of a specific RBS gate. The parameterized unitary represented by this layer is expressed as $U(\boldsymbol{\theta})$. }
    \label{fig:orthogonal-layers}
\end{figure}

There are different linear-depth circuits for $U(\boldsymbol{\theta})$, highlighted in Figure~\ref{fig:orthogonal-layers}, each with its own unique properties. The Pyramid architecture, as described in~\cite{kerenidis_classical_2022}, consists of $n(n-1)/2$ RBS gates arranged in a pyramid-like structure and has a linear depth. This architecture allows for the representation of all possible orthogonal matrices of size $n \times n$. The Brick architecture is a variation of the Pyramid architecture and also consists of $n(n-1)/2$ RBS gates. However, it has a more compact layout of gates, while still exhibiting similar properties. Both the Pyramid and Brick architectures can be implemented in hardware with nearest-neighbour connectivity between qubits.
\newpage
\begin{figure}[t!]
    \centering
  \includegraphics[width=0.75\textwidth]{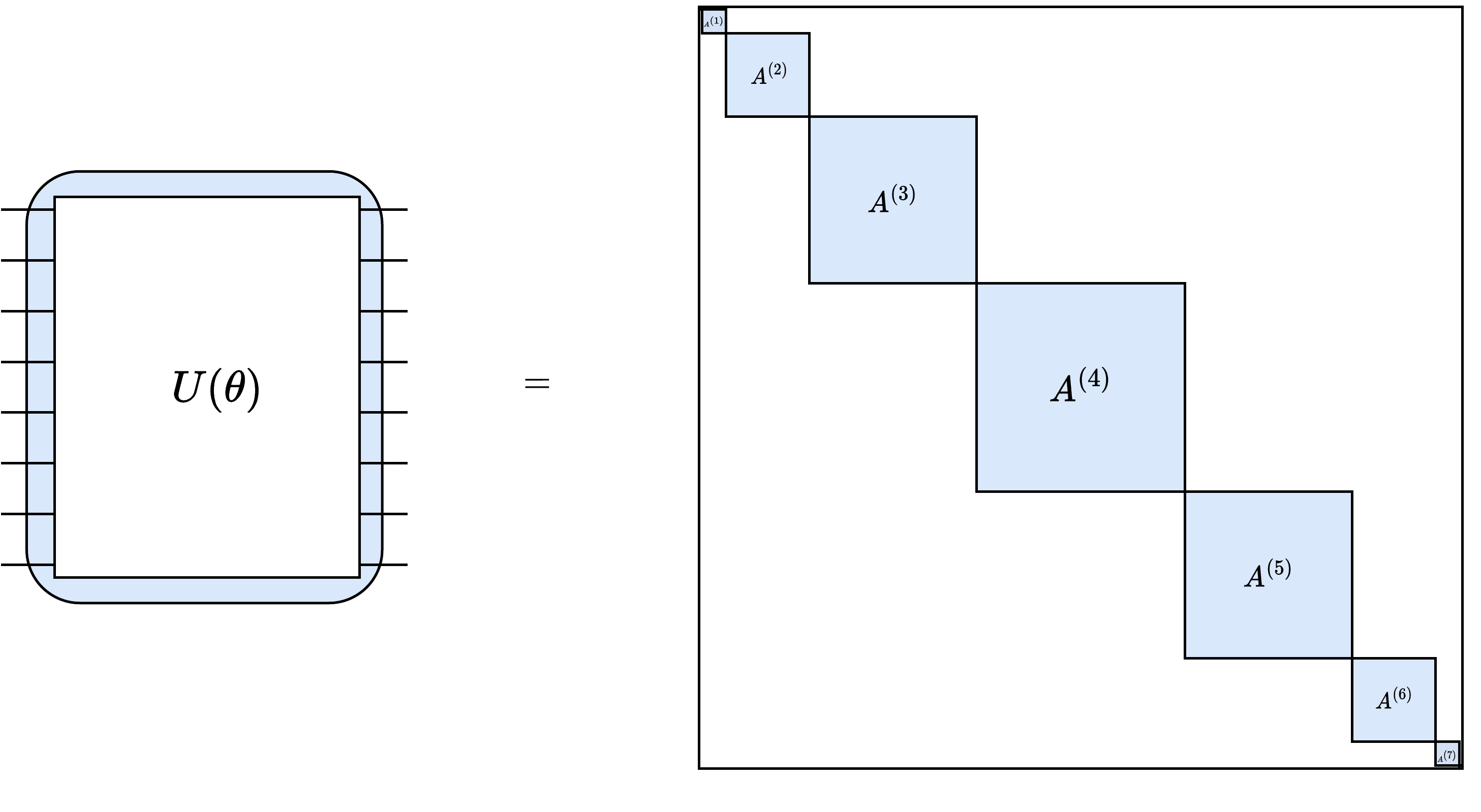}
  \caption{A quantum compound layer $U(\boldsymbol{\theta})$ acts as a block diagonal unitary on each fixed Hamming-weight subspace.}
  \label{block}
\end{figure}

On the other hand, the Butterfly architecture, which was proposed in~\cite{cherrat_quantum_2022}, uses logarithmic depth circuits with a linear number of gates to implement a quantum orthogonal layer. This architecture requires all-to-all connectivity in the hardware layout. To summarize, an orthogonal layer with input size $n$ uses a parametrized quantum circuit with $n$ qubits and a number of parameters equal to $\binom{n}{2}$ for a Pyramid or Brick circuit and can be implemented on hardware with nearest-neighbour connectivity. For a Butterfly circuit, the number of parameters is $\frac{n}{2} \log(n)$ and requires all-to-all connectivity in the hardware.

Because classical data can be efficiently loaded onto a quantum state and retrieved from a quantum orthogonal layer, it is possible to compose quantum orthogonal layers with nonlinear activation functions. More specifically, after applying the sequence of $\text{RBS}$ gates, the matrix-vector product $W\ket{\boldsymbol{x}}$ is readout and an activation function is applied classically. The resulting vector is then loaded onto a quantum state using the unary data loader for the next layer. In Section~\ref{subsec:qnn-archs-for-timeseries}, we will make use of this scheme to construct various quantizations of classical neural architectures for time series. 

\subsection{Quantum Compound Layers}
\label{sec:compound_layer}

The quantum compound layer is a natural and powerful generalization of the orthogonal layer~\cite{landman_quantum_2022} and a version of it has been previously used~\cite{cherrat_quantum_2022} to implement quantum analogues of vision transformers. The prefix ``compound'' refers to the fact that the quantum circuits implement linear operators on the exterior power of a vector space.

For an $n$-dimensional vector space $V$ with orthonormal basis $\{e_{i}\}_{i=1}^{n}$, the $k$-th exterior power $\bigwedge^{k} V$ is the $\binom{n}{k}$-dimensional vector space spanned by the $k$-fold alternating products of vectors in $V$:  $$\bigwedge\nolimits^{k} V:= \text{span
}_{\mathbb{C}}\{e_{i_1} \wedge e_{i_2} \wedge \dots e_{i_k} | i_1 < i_2 < \cdots < i_k \in [n]\}.$$ The alternating property implies that for any permutation $\sigma$ of  the indices: $$e_{\sigma(i_1)} \wedge e_{\sigma(i_2)} \wedge \cdots \wedge e_{\sigma(i_k)} = \text{sign}(\sigma)\times e_{i_1} \wedge e_{i_2} \wedge \cdots \wedge e_{i_k}.$$ The direct sum $ \bigoplus_{k=0}^{n}\bigwedge^{k} V$ equipped with the alternating product forms the exterior algebra of $V$, denoted $\bigwedge V$. 

For any linear operator $A$ on $V$, there exists an extension to a linear operator $A^{(k)}$ on $\bigwedge^{k} V$, which acts as
$$ A^{(k)}(e_{i_1} \wedge e_{i_2} \wedge \cdots e_{i_k}) = Ae_{i_1} \wedge Ae_{i_2} \wedge \cdots \wedge Ae_{i_k}$$
on $k$-vectors. The matrix for the extended operator, called the $k$-th (multiplicative) compound matrix, has as entries $A^{(k)}_{IJ} = \det(A_{IJ})$, where $I$ and $J$ are $k$-sized subsets of the rows and columns of $A$, respectively. Furthermore, there exists a unique linear operator $\mathcal{A} := \bigoplus_{k=0}^{n}A^{(k)}$ over $\bigwedge V$ such that the restriction to the $k$-th exterior power is $A^{(k)}$.

In the quantum setting, the $k$-vectors $e_{i_1} \wedge \cdots \wedge e_{i_k}$ are mapped to computational basis states $\ket{S}$, where $S \in \{0, 1\} ^{n}$, $|S| = k$, and $\forall t \in [k], S_{i_t} = 1$. Thus $n$-qubits can be used to encode a projectivization of the exterior algebra $\bigwedge V$. To apply compound matrices to $k$-vectors in the qubit encoding, we utilize Fermionic Beam Splitter (FBS) gates, whose action on qubits $i$ and $j$ depends on the parity of the qubits between $i$ and $j$. On a computational basis state $\ket{S}$, the FBS gate acts on qubits $i$ and $j$ as the unitary matrix below:
$$ \text{FBS}_{ij}(\theta) = \begin{pmatrix} 1 & 0 & 0 & 0 \\
     0 & \cos(\theta) & (-1)^{f(i, j, S)}\sin(\theta) & 0 \\ 
     0 & (-1)^{f(i, j, S)+1}\sin(\theta) & \cos(\theta) & 0 \\
      0 & 0 & 0 & 1 \end{pmatrix},
$$
where $\theta \in [0, 2\pi)$ and $f(i, j, S) = \sum_{i < k < j}s_k$, and as identity on all other qubits. An FBS gate can be implemented as the composition of controlled-$X$ gates, controlled-$Z$ gates, and RBS gates.

Like in the previous subsection, let $G(i, j, \theta)$ denote the Givens rotation applied to the $i$-th and $j$-th basis vector, i.e. $e_i$ and $e_j$, $\boldsymbol{\theta}$ a vector of angles, and $\mathcal{T}$ is a list of triplets $(i, j, m)$. The quantum compound layer is defined by
$$U(\boldsymbol{\theta}) = \prod_{(i, j, m) \in \mathcal{T}}\text{FBS}_{ij}(\theta_{m}).$$
It can be shown~\cite{kerenidis_quantum_2022} that this layer  acts as $ U(\boldsymbol{\theta})\ket{S} = A^{(k)}\ket{S} $ for $|S| = k$, where $A^{(k)}$ is the $k$-th multiplicative compound of  $A =  \prod_{(i, j, m) \in \mathcal{T}}G(i, j, \theta_{m})$. Thus the operation $U$ acts as $\mathcal{A}$ over the quantum state space, in other words it is a block diagonal unitary that acts separately on each fixed Hamming-weight subspace (see Figure~\ref{block}).

The compound layer is similar to the circuits we described in the orthogonal layer case, where there, given we only consider the unary basis, the FBS gates can be replaced by the simpler RBS gates. Note as well, that RBS and FBS gates acting on nearest-neighbor qubits, as in the Pyramid and Brick circuits, are also equivalent. The main difference in the compound layer comes from the fact the data loading part is not restricted to the unary basis, and thus one needs to consider the entire exponential size block-diagonal unitary, and not only its linear size restriction to the unary basis. 

Thus, one can see that by controlling the Hamming-weight of the basis states used in the data loading part one can smoothly control the size of the explored space, from linear size, when using a unary basis for data loading, to an exponential size, when we use all possible Hamming-weights, as is the case for example when we load each coordinate of a classical data point by performing a one-qubit rotation with an appropriate data-dependent angle. 

Lastly, since any element of $\text{SO}(n)$ can be expressed as a product of $\mathcal{O}(n^2)$ Givens rotations, the direct sum of compound matrices can be implemented efficiently as a composition of FBS gates. Thus quantum computation can be used to efficiently parametrize and apply compound matrices over the exterior algebra.

 \begin{figure}[t!]   
 \centering
    \subfloat[Feed-forward architecture.]{
        \label{fig:feedforward}
        \includegraphics[width=0.32\textwidth]{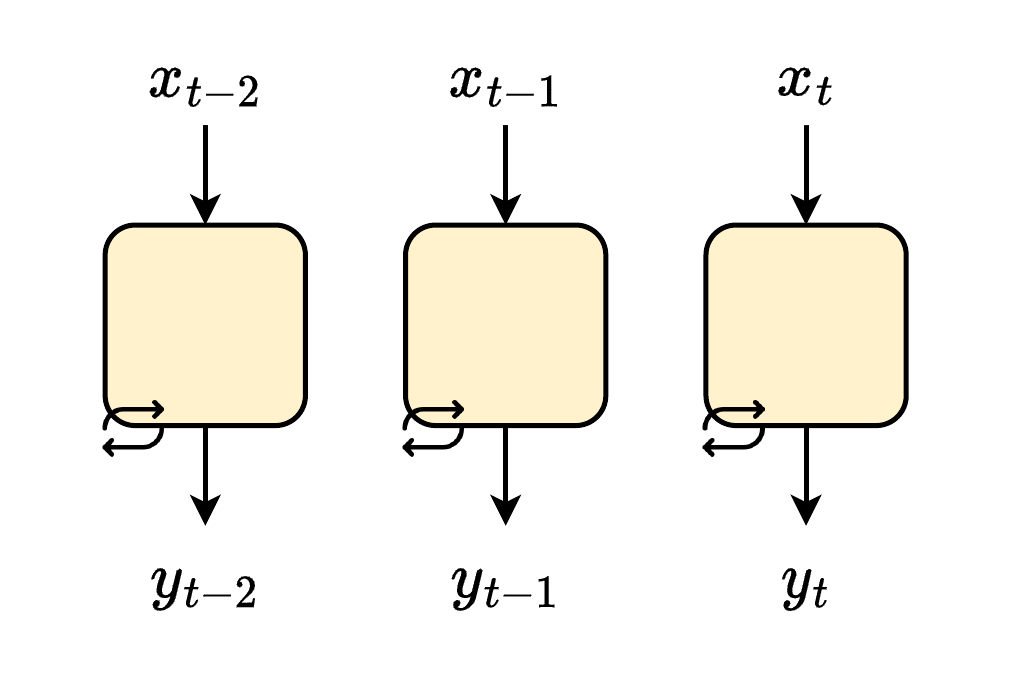}}
        \subfloat[width=0.9\textwidth][Recurrent architecture. ]{
        \label{fig:recurrent}
        \includegraphics[width=0.32\textwidth]{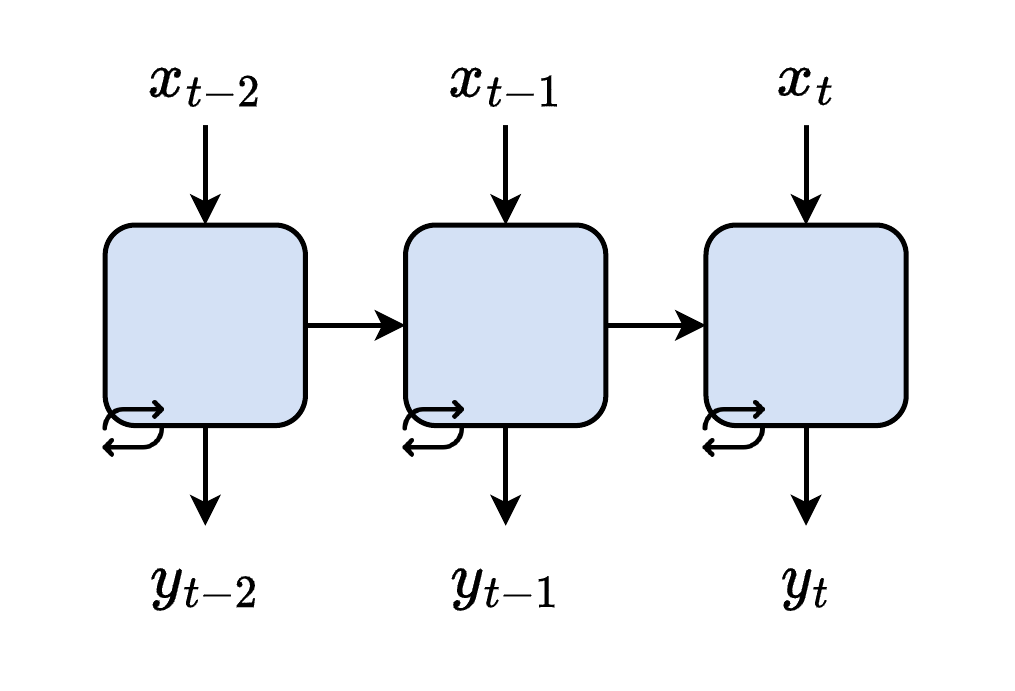}} 
        \subfloat[width=0.9\textwidth][Attention mechanism.]{
        \label{fig:attention}
        \includegraphics[width=0.32\textwidth]{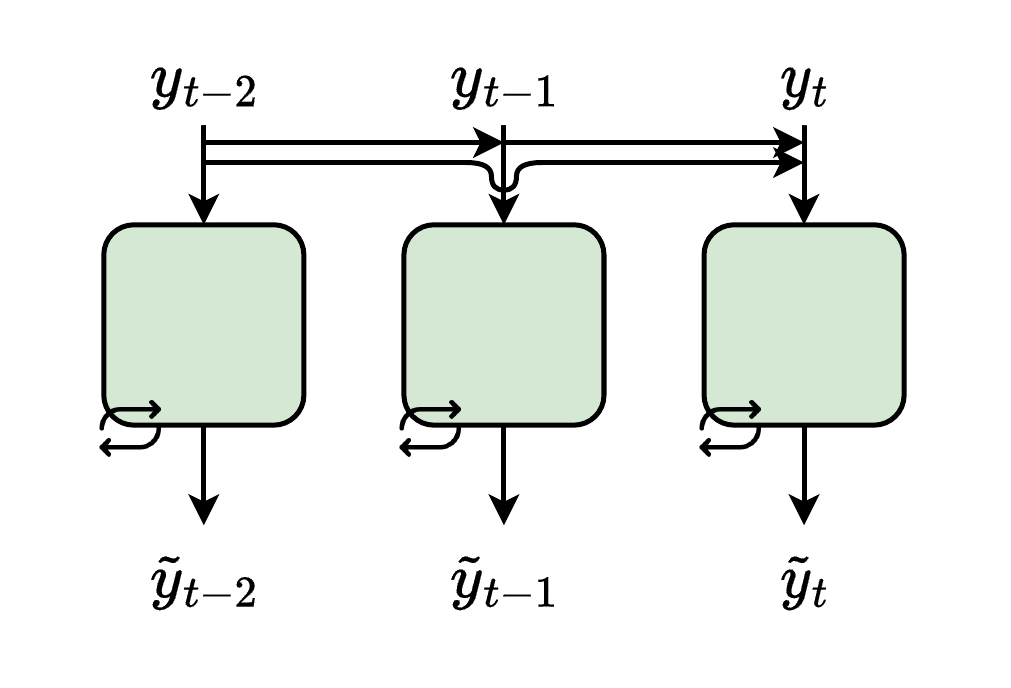}}
    \caption{Diverse quantum neural network architectures for time-series data, featuring orthogonal layers in each block as outlined in Section~\ref{subsec:qnn-archs-for-timeseries}. Here, $\boldsymbol{x}_t$ and $\boldsymbol{y}_t$ denote the time-series input and output, respectively, while $\tilde{\boldsymbol{y}}_t$ represents the output after being adjusted by the attention mechanism. }
    \label{fig:networks}
\end{figure}

\subsection{Quantum Neural Network Architectures with Orthogonal Layers}
\label{subsec:qnn-archs-for-timeseries}

Our aim is to develop quantum neural networks capable of processing sequential data. To achieve this, we will utilize classical neural network architectures that have proven to be effective in dealing with time-series data. However, we will extend these classical architectures by replacing the linear layers with our orthogonal layers. This approach offers an alternative to the use of quantum variational circuits, which are commonly used in quantum machine learning. As discussed in Section~\ref{subsec:quantum-analogues-of-classical-nns}  orthogonal layers use only the unary basis whose size is equal to the number of qubits, and hence one can easily perform tomography to obtain a classical description of the output and apply a nonlinearity. By combining the strengths of classical neural networks with the unique properties of quantum orthogonal layers, we hope to achieve improved results in processing sequential data. 

We designed quantum neural networks to process input time-series data $(\boldsymbol{x}_0, \boldsymbol{x}_1, \dots, \boldsymbol{x}_T)$ and produce the final output sequence $(\boldsymbol{y}_0, \boldsymbol{y}_1, \dots, \boldsymbol{y}_T)$. We split these architectures into two categories: feed-forward and recurrent  architectures. 

For the purpose of details, we assume that the input and output have the same dimension $n$ and that this dimension is maintained across layers. Additionally, these architectures are made up of blocks that can be repeated to create deeper architectures. Here, we will assume that the number of blocks is one.

\begin{itemize}
    \item \textbf{Feed-forward Architectures:}  A classical Feed-forward neural network consists of multiple layers of transformations, where information flows from input to output without looping back. In each layer, a linear transformation, bias shift, and a non-linear function are applied. The output is calculated as
    $$\boldsymbol{x}_t = \mathrm{f}\left(W\boldsymbol{x}_t + \beta\right),$$
    where $\mathrm{f}$ is the activation function, and $W$ and $\beta$ are the weights and biases. The number of parameters in each layer of a classical network is $\mathcal{O}(n^2)$. Our proposed quantum equivalent (Figure~\ref{fig:feedforward}) calculates each transformation as
    $$\boldsymbol{y}_t = \mathrm{f}\left(\gamma \circ U(\boldsymbol{\theta})\ket{\boldsymbol{x}_t} + \beta\right),$$
    where $\circ$ represents the element-wise product, $U(\boldsymbol{\theta})\ket{\boldsymbol{x}_t}$ is the output of a quantum orthogonal layer retrieved using tomography, $\gamma$ is a scaling factor used to rescale each feature, $\beta$ is a shift that acts as the bias, and $\mathrm{f}$ is a non-linear function such as $\mathrm{sigmoid}$, $\mathrm{tanh}$, or $\mathrm{ReLU}$. All parameters $\boldsymbol{\theta}$, $\gamma$, and $\beta$ are trainable and shared across the networks used at different time steps. The total number of trainable parameters is $\mathcal{O}(n^2)$ when using the Brick and Pyramid architectures, and $\mathcal{O}(n\log(n))$ for the Butterfly architecture. Additionally, the parameters of the quantum orthogonal layers can either be shared across layers or not, depending on the requirements of the time-series model being used. 

    \item \textbf{Recurrent Architectures:} Recurrent neural networks are designed to handle sequential data. One example of a recurrent architecture is the standard Recurrent Neural Network (RNN). In this example, we will show how to provide a quantum version of RNNs (Figure~\ref{fig:recurrent}), but the same approach can be applied to other recurrent models such as the LSTM. RNNs consist of a repeating module with a hidden state, $\boldsymbol{h}_t$, that allows information to be passed from one step of the sequence to the next. At each time-step, the network takes as input $\boldsymbol{x}_t$ and the hidden state from the previous time-step, $\boldsymbol{h}_{t-1}$. 
    The hidden state is updated as
    $$\boldsymbol{h}_t = \mathrm{f}\left(W_x \boldsymbol{x}_t + W_h \boldsymbol{h}_{t-1} + \beta \right),$$
    where $\mathrm{f}$ is an activation function, $W_x$ and $W_h$ are weight matrices, and $\beta$ is a bias vector. 
    The quantum analogue of this update is represented as
    $$\boldsymbol{h}_t = \mathrm{f}\left(\gamma_{x} \circ U(\boldsymbol{\theta}_{x})\ket{\boldsymbol{x}_t} + \gamma_h \circ U(\boldsymbol{\theta}_{h})\ket{\boldsymbol{h}_{t-1}} + \beta\right),$$
    where we now have two orthogonal layers with parameters $\boldsymbol{\theta}_{x}$ and $\boldsymbol{\theta}_{h}$, one for the input state $\boldsymbol{x}_t$ and another for the previous hidden state $\boldsymbol{h}_{t-1}$. We also have two scaling factors, $\gamma_x$ and $\gamma_h$. The hidden state $\boldsymbol{h}_t$ is then used to generate the output at each time-step, $\boldsymbol{y}_t$, through another layer, which can be implemented using another orthogonal layer to map it to the output. Moreover, the parameters are shared across layers and we can show that the number of trainable parameters per layer in the quantum Recurrent Neural Networks grows similarly to that in the quantum Feed-forward Neural Networks, with the total number of parameters being $\mathcal{O}(n^2)$ for the Brick and Pyramid architectures and $\mathcal{O}(n\log(n))$ for the Butterfly architecture. 
\end{itemize}

We also define an attention mechanism that can be applied to the output sequence $(\boldsymbol{y}_0, \boldsymbol{y}_1, \dots, \boldsymbol{y}_T)$ to create a transformer architecture. 

\begin{itemize}
    \item \textbf{Attention Mechanism:} The attention mechanism, as used in the Transformer architecture~\cite{vaswani_attention_2017}, can be applied to both Feed-forward and Recurrent Neural Networks. Here, we describe a basic quantum attention mechanism (Figure~\ref{fig:attention}), but it can be generalized. Given an output sequence $(\boldsymbol{y}_1, \boldsymbol{y}_2, \dots, \boldsymbol{y}_T)$, the goal of the attention mechanism is to compute the output $(\tilde{\boldsymbol{y}}_1, \tilde{\boldsymbol{y}}_2, \dots, \tilde{y}_T)$  as $\tilde{\boldsymbol{y}}_t=\sum_{t' \leq t} w_{t,t'} \boldsymbol{y}_{t'}$ where the weights $w_{t,t'}$ are computed by considering all previous time steps: 
    $$ w_{t,t'} \propto \exp(\boldsymbol{y}_{t'}W_y \boldsymbol{y}_{t}/\tau),$$ 
    where $W_y$ is a trainable attention matrix that combines the query and key matrices into one matrix, as described in~\cite{cherrat_quantum_2022}, and $\tau$ is a temperature parameter. 
    In the quantum case, we use
    $$ w_{t,t'} \propto \exp({\bra{\boldsymbol{y}_{t'}}U(\boldsymbol{\theta}_y)\ket{\boldsymbol{y}_t}}/\tau),$$
    where $\gamma_y$ is a scaling factor used to rescale each feature, $U(\boldsymbol{\theta}_y)$ are the parameters of the quantum orthogonal layer that computes the attention weights $w$. The dot product between $\ket{\boldsymbol{y}_{t'}}$ and $U(\boldsymbol{\theta}_y)\ket{\boldsymbol{y}_t}$ can be computed quantumly using an additional data-loader to unload $\boldsymbol{y}_{t'}$ after applying $U(\boldsymbol{\theta}_y)$ to $\boldsymbol{y}_t$. This procedure is similar to the one described in~\cite{johri_nearest_2021}.
\end{itemize}

\begin{figure}[t!]
    \centering
  \includegraphics[width=0.56\textwidth]{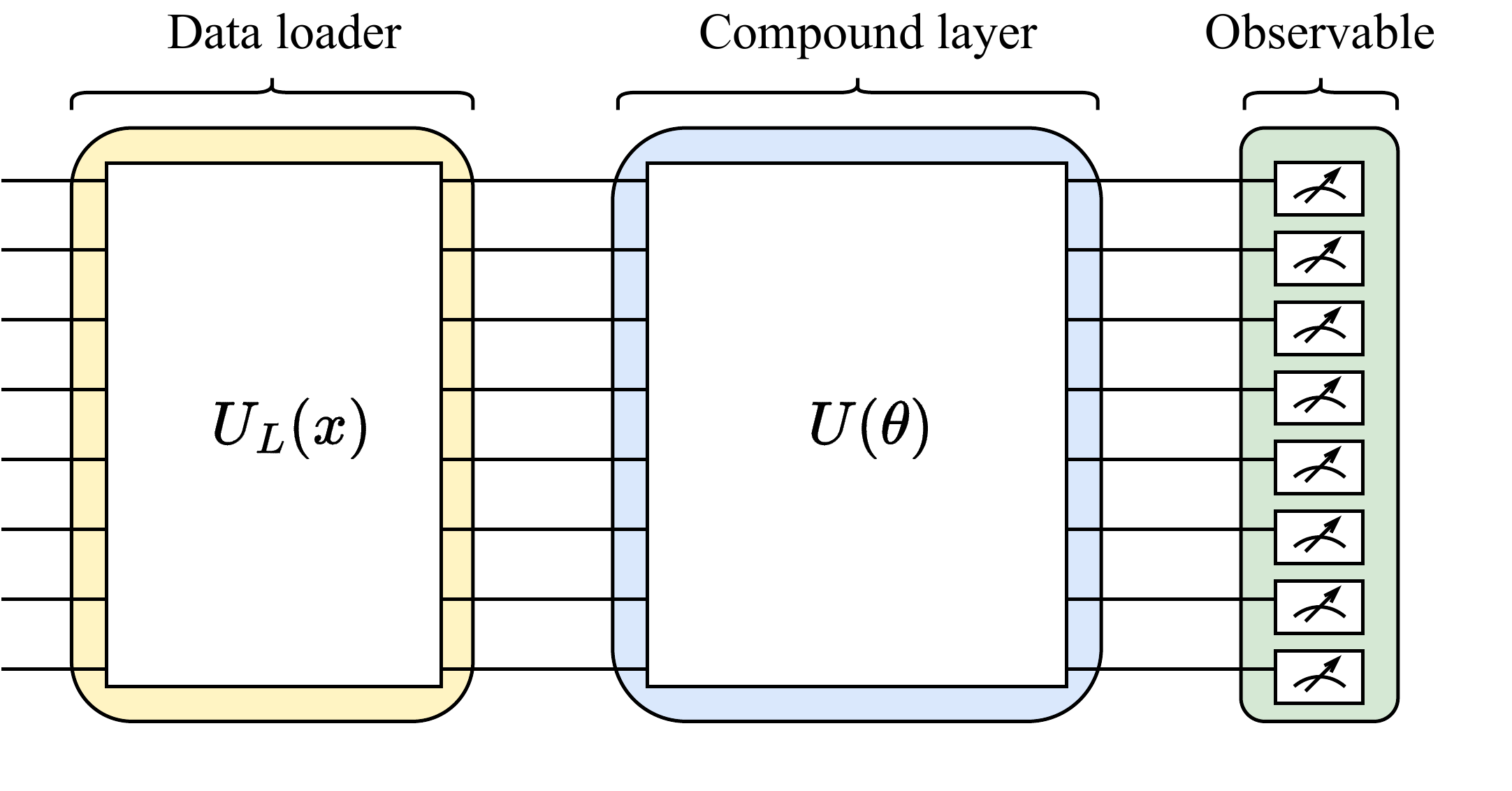}
  \caption{A quantum compound neural network. $U_{L}(\boldsymbol{x})$ refers to a general data loader unitary. $U(\boldsymbol{\theta})$ denotes a Hamming-weight preserving unitary as for example the ones shown in Figure~\ref{fig:orthogonal-layers}.}
  \label{compound}
\end{figure}

\subsection{Properties of Quantum Compound Neural Networks}
\label{subsec:quantum-compound-nns}

We define a quantum compound neural network to be the standard variational QNN, i.e. 
$$
    C(\boldsymbol{\theta}, \boldsymbol{x}) = \text{Tr}[OU(\boldsymbol{\theta})\rho(\boldsymbol{x})U^{\dagger}(\boldsymbol{\theta})],
$$
where $O$ is an observable that preserves Hamming-weight, e.g. diagonal in the computational basis), $U$ is a quantum compound layer, and $$\rho(\boldsymbol{x}) = U_L(\boldsymbol{x})[\ketbra{0}]^{\otimes n}U_L^{\dagger}(\boldsymbol{x})$$ is a quantum feature map (Figure~\ref{compound}). As mentioned in Section~\ref{sec:compound_layer}, it is sufficient to use only $\text{RBS}$ gates in the Brick architecture to implement a quantum compound layer. Thus, without loss of generality, we define $U(\boldsymbol{\theta})$ to consist only of $\text{RBS}$ gates in the Brick architecture.

Under these assumptions, the output of the QNN decomposes as
$$
    C(\boldsymbol{\theta}, \boldsymbol{x}) = \sum_{k=0}^{n}\text{Tr}[P_{k}\rho(\boldsymbol{x})]\text{Tr}[O^{(k)}A_{\boldsymbol{\theta}}^{(k)}\rho^{(k)}(\boldsymbol{x})(A_{\boldsymbol{\theta}}^{(k)})^{-1}],
$$
where $P_k$ is the projector onto the Hamming-weight-$k$ subspace, $O^{(k)} = P_{k}OP_{k}$, $\rho^{(k)} = P_{k}\rho P_{k}$, and $A_{\boldsymbol{\theta}}^{(k)}$ is the $k$-th compound matrix associated with the Givens circuit $U(\boldsymbol{\theta})$ in the manner discussed earlier. One potential application of such a subspace-preserving QNN could be the following. Suppose there is some canonical grouping of the input data $\boldsymbol{x} \in \mathcal{X}$ according to a partitioning function $f:\mathcal{X} \xrightarrow[]{} [n+1]$. Then one could potentially construct a quantum-feature map or state preparation procedure such that $P_{f(\boldsymbol{x})}\rho(\boldsymbol{x})P_{f(\boldsymbol{x})} = \rho(\boldsymbol{x})$, i.e. the quantum states encoding inputs lying in different groups are embedded into different Hamming-weight subspaces. Then it follows that
$$
      C(\boldsymbol{\theta}, \boldsymbol{x}) = \text{Tr}\left[O^{(f(x))}A_{\boldsymbol{\theta}}^{(f(x))}\rho^{(f(x))}(\boldsymbol{x})(A_{\boldsymbol{\theta}}^{(f(x))})^{\mathsf{T}}\right],
$$
and the quantum compound neural network can potentially learn different functions over the different groups. This form of learning is a special case of group-invariant machine learning, which has also recently been explored in the quantum case~\cite{larocca_group-invariant_2022}. Note that the parameters $\boldsymbol{\theta}$ are shared across the different functions which is beneficial for training. Furthermore, we show below that under Gaussian initialization, the variance of the gradient on each subspace does not vanish exponentially with the number of qubits. Thus quantum compound neural networks can be trained efficiently.

Classical neural networks over exterior algebras have been applied to manifold learning tasks, specifically for data that lies on Grassmannians~\cite{zhang_grassmannian_2018}. The $(n, k)$-Grassmannian of $V$ is a manifold containing all $k$-dimensional subspaces of $V$ and can be embedded in the space $\bigwedge^{k} V$, where $k$-wedge products of orthogonal vectors define subspaces. When considering $A \in \text{SO}(n)$, the operator $A^{(k)}$ maps between elements of the Grassmannian. While $\binom{n}{k}$ can be large, the application of $A^{(k)}$ to a Grassmannian can done by multiplying $n \times k$ and $n \times n$ matrices. However, optimizing the linear layers of the neural network while ensuring the data remains an orthogonal matrix can be computationally challenging.

Since the embedding of the $(n,k)$-Grassmannian into $\bigwedge\nolimits^{k} V$ is not surjective, and in the quantum case we can apply compound matrices to the larger space $\bigwedge\nolimits^{k} V$,  the technique used classically for Grassmannian neural networks to reduce the dimension of the matrix-vector products cannot be applied to simulate the quantum case. In other words, such compound layers are inherently quantum and perform an operation that in the general case seems to take exponential time to perform classically. Nevertheless, we will show below that such compound layers remain trainable in certain settings.

In Section~\ref{sec:quantum-deep-hedging}, we present a specific example of the Deep Hedging problem where grouping the inputs by Hamming-weight is natural in the quantum setting and improves the accuracy of the model.

Highly expressive QNNs are known to suffer from barren plateaus in their training landscape at initialization. This occurs when the variance of the gradient decays exponentially with the number of qubits, which makes sampling to estimate the gradient asymptotically intractable. Specifically, consider a typical QNN of the form
$$C(\boldsymbol{\theta}) = \text{Tr}[OU(\boldsymbol{\theta})\rho U^{\dagger}(\boldsymbol{\theta})],$$
where $O$ is an observable and $U$ is a 2-design for the Haar measure $\mu$ on  $\text{SU}(2^n)$. Using known formulas for integration over the Haar measure on compact groups, it was shown~\cite{mcclean_barren_2018} that  $\text{Var}_{\mu}[\partial_{\theta_i}C(\boldsymbol{\theta})] = \mathcal{O}(1/2^{2n})$.

More recent lines of work have shown that the symmetries of the parameterized quantum circuit also need to be considered and play an important role in understanding, for example,  convergence~\cite{you_convergence_2022, larocca_theory_2021} and trainability~\cite{larocca_diagnosing_2022} of QNNs. For trainability, it was shown that if the input state $\rho$ lies in the invariant subspace $\mathcal{H}_k$, then the variance of the gradient is in the worst case $O(1/d_{k}^2)$, where $d_k = \dim \mathcal{H}_k$. Note there will also be a dependence on the initial state used. While this result was shown for $\text{SU}$, the asymptotics of Haar moments are similar for other classical compact Lie groups~\cite{collins_integration_2006}, such as $\text{SO}$. In the case of the compound layer, the subspaces $\mathcal{H}_k$ are spanned by computational basis states with Hamming-weight $k$. Thus if $\rho \in \mathcal{H}_{k}$, where $k$ is such that $\binom{n}{k} = \mathcal{O}(\text{poly}(n))$, then the variance of the gradient does not exponentially decay with growing system size for all initial states. This is at least the case when $k$ is independent of $n$. 

It was further conjectured in~\cite{larocca_diagnosing_2022} \footnote{We note that after the writing of this manuscript, this conjecture was independently proven for certain observables~\cite{fontana2023adjoint,ragone2023unified, monbroussou2023trainability}.} that the variance actually scales with the dimension of the (dynamical) Lie algebra restricted to the invariant subspace, which can be polynomial in $n$ even when $\dim \mathcal{H}_{k}$ is exponential.  In the case of the compound layer, the dimension of the Lie algebra of the compound matrix group for $\text{SO}(n)$ is actually equal to the dimension of $\mathfrak{so}(n)$. Thus even though there are invariant subspaces whose dimension is exponential in $n$, the dimension of  the Lie algebra grows at most quadratically in $n$. If proven true, this conjecture would imply that the variance does not decay exponentially on any subspace, e.g. for $k=n/2$.

It is possible to go beyond unproven conjectures by making some well justified assumptions about the initialization and measurement phases of the quantum compound neural network, namely that (1) The parameters are randomly initialized from centered Gaussian distributions with variance inversely proportional to the number of gates in the circuit (2) The final measurement is made in the computational basis. Specifically, we make a vector valued measurement by measuring $Z_i = I^{\otimes i-1} \otimes Z \otimes I^{\otimes n  - i}$ for each qubit $i \in [1,n]$. Since the operators $Z_i$ all commute, the order of measurement is arbitrary and the measurement is well-defined. Such a measurement allows for loss functions that are arbitrary functions of the measured bit-string. It suffices therefore to demonstrate that the gradient of the output after measuring each $Z_i$ does not decay exponentially with the number of qubits. The overall gradient may still decay depending on the loss function used, but this would be a property of the loss itself and not of the quantum neural network. (3) The initial state is chosen to be either the uniform superposition, or the uniform superposition over all computational basis states of Hamming-weight $k$, where $1 < k < n$.

We now give a rigorous proof that the overall gradient does not vanish in this setting. We note that this is the setting 
in our numerical experiments~(Section~\ref{sec:applications}). We use the following theorem from~\cite{zhang_escaping_2022}, which we paraphrase below in the necessary form.

\begin{theorem}[{Paraphrased from \cite[Theorem 4.2]{zhang_escaping_2022}}]
\label{thm:gauss-vanishing}
Consider any $n$-qubit variational form with output function given by $C(\boldsymbol{\theta}) = \text{Tr}(O\Pi_{j=L}^{1}U_j(\theta_j)\rho_{\text{in}}\Pi_{j=1}^{L}U_j(\theta_j)^{\dagger})$, where $O$ is some $n$-qubit observable, and each $U_j$ is expressible as the product of a constant number of parametrized 2-qubit gates (potentially with shared parameters). Then for any parameter $\theta_j$ it holds that
\begin{align*}
\mathbb{E}_{\boldsymbol{\theta}} \left(\frac{\partial C}{\partial \theta_j}\right)^2 \ge \frac{1}{2}\left(\frac{\partial C}{\partial \theta_j}\right)^{2} \Bigg\rvert_{\boldsymbol{\theta}=0},
\end{align*}
when each $\theta_j$ is initialized from a normal distribution $\mathcal{N}(0,\gamma^2)$ with $\gamma^2 = \mathcal{O}(1/L)$.
\end{theorem}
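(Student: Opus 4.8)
The result is a \emph{stability} statement: Gaussian initialization with variance $\Theta(1/L)$ is a perturbation of the all‑zero point that is small enough that the second moment of the gradient cannot drop below half of its value at $\boldsymbol\theta=\boldsymbol0$. I would make this quantitative by a ``doubling'' computation in which the Gaussian average over each angle becomes an explicit contractive super‑operator via a Gaussian Fourier transform. Write each $U_k(\theta_k)$ in the form $e^{-i\theta_k H_k}$ (absorbing fixed unitaries, which leaves the generators $H_k$ Hermitian of bounded norm), and use $\mathrm{Ad}_U(X)=UXU^\dagger$, $\mathrm{ad}_H(X)=[H,X]$; as super‑operators $\mathrm{Ad}_{e^{-i\theta H}}=e^{-i\theta\,\mathrm{ad}_H}$.

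Fix the index $j$. The product rule gives the generator form
\[
\frac{\partial C}{\partial\theta_j}(\boldsymbol\theta)= -i\,\mathrm{Tr}\!\Big(O\,\big(\textstyle\prod_{k>j}\mathrm{Ad}_{U_k(\theta_k)}\big)\circ\mathrm{ad}_{H_j}\circ\mathrm{Ad}_{U_j(\theta_j)}\circ\big(\textstyle\prod_{k<j}\mathrm{Ad}_{U_k(\theta_k)}\big)(\rho_{\mathrm{in}})\Big)
\]
(with a finite sum over occurrences if $\theta_j$ is shared), so at $\boldsymbol\theta=\boldsymbol0$ every parametric conjugation is the identity and $\partial_jC(\boldsymbol0)=-i\,\mathrm{Tr}(O\,[H_j,\rho_{\mathrm{in}}])$. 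Writing $\mathcal A_{\boldsymbol\theta}$ for the super‑operator appearing in the display (so $\partial_jC=-i\,\mathrm{Tr}(O\,\mathcal A_{\boldsymbol\theta}(\rho_{\mathrm{in}}))$), the quantity $(\partial_jC)^2$ equals $-\mathrm{Tr}\big(O^{\otimes2}\,\mathcal A_{\boldsymbol\theta}^{\otimes2}(\rho_{\mathrm{in}}^{\otimes2})\big)$. Because the angles are mutually independent, taking $\mathbb E_{\boldsymbol\theta}$ factorizes layer by layer, and each layer contributes
\[
\mathcal N_k:=\mathbb E_{\theta_k\sim\mathcal N(0,\gamma^2)}\!\big[(\mathrm{Ad}_{U_k(\theta_k)})^{\otimes2}\big]=\mathbb E_{\theta_k}\!\big[e^{-i\theta_k\mathcal L_k}\big]=e^{-\frac{\gamma^2}{2}\mathcal L_k^2},\qquad \mathcal L_k:=\mathrm{ad}_{H_k}\!\otimes\mathbb I+\mathbb I\otimes\mathrm{ad}_{H_k},
\]
using that $\mathcal L_k$ is self‑adjoint with real spectrum. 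Each $\mathcal N_k$ is a mixed‑unitary channel on the doubled space, self‑adjoint and positive semidefinite for the Hilbert--Schmidt inner product, with spectrum in $[e^{-c_0\gamma^2},1]$ for a constant $c_0$ depending only on the gate set (since $\|\mathcal L_k\|$ is bounded), and reducing to the identity super‑operator at $\gamma=0$. Hence
\[
\mathbb E_{\boldsymbol\theta}\!\Big[\big(\tfrac{\partial C}{\partial\theta_j}\big)^2\Big]= -\,\mathrm{Tr}\!\Big(O^{\otimes2}\,\big(\textstyle\prod_{k>j}\mathcal N_k\big)\circ(\mathrm{ad}_{H_j})^{\otimes2}\circ\mathcal N_j\circ\big(\textstyle\prod_{k<j}\mathcal N_k\big)(\rho_{\mathrm{in}}^{\otimes2})\Big),
\]
which collapses to $\big(\partial_jC(\boldsymbol0)\big)^2$ when all $\mathcal N_k$ are replaced by the identity.

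It remains to show that inserting the $L$ near‑identity channels $\mathcal N_k$ around the fixed super‑operator $(\mathrm{ad}_{H_j})^{\otimes2}$ costs at most a factor $\tfrac12$. Telescoping gives $\big\|\prod_k\mathcal N_k-\mathbb I\big\|\le L\,(1-e^{-c_0\gamma^2})\le c_0L\gamma^2$, a constant once $\gamma^2=c/L$, and one fixes $c$ small; the genuine difficulty --- the step I expect to be the main obstacle --- is that this closeness is only in operator norm, which yields an \emph{additive} error of order $c$ times $\|O^{\otimes2}\|\cdot(\cdots)$, whereas $\big(\partial_jC(\boldsymbol0)\big)^2$ may be exponentially small, so one needs a \emph{multiplicative} bound and a naive estimate through the Fourier‑algebra norm of $\partial_jC$ is too weak whenever $\partial_jC(\boldsymbol0)$ results from large cancellations. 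I would close this by splitting on the scale of $\gamma$: for $\gamma$ below a fixed threshold, invoke analyticity of $C$ (each gate is entire in $\theta_k$ with an explicit exponential‑type growth bound, giving Cauchy estimates on all mixed partials at $\boldsymbol0$), Taylor‑expand $(\partial_jC)^2$ about $\boldsymbol0$, and control the remainder multiplicatively; for $\gamma$ above it, use that $\partial_jC$ is a trigonometric polynomial of bounded degree in each coordinate, so that its Gaussian average of squares is comparable to its $L^2$‑average $\|\partial_jC\|_2^2$, which dominates $\big(\partial_jC(\boldsymbol0)\big)^2$ up to a dimension‑free constant. Equivalently, one can run an induction on the layers, peeling off one $\mathcal N_k$ at a time and showing each peel costs at most $1-O(\gamma^2)$; this reduces the whole statement to the one‑variable inequality $\mathbb E_{\theta\sim\mathcal N(0,\gamma^2)}[F(\theta)^2]\ge(1-O(\gamma^2))F(0)^2$ for real trigonometric polynomials $F$ of bounded degree, which is where the constant hidden in $\gamma^2=O(1/L)$ gets pinned down.
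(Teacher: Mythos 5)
First, a point of reference: the paper does not prove this statement at all --- it is imported verbatim (as a paraphrase) from Zhang et al.~\cite{zhang_escaping_2022}, Theorem 4.2, and is used as a black box to derive Theorem~\ref{thm:gaussian_init_compound}. So there is no in-paper proof to match; your attempt has to stand on its own. Your setup is sound as far as it goes: writing $\mathrm{Ad}_{e^{-i\theta H}}=e^{-i\theta\,\mathrm{ad}_H}$, doubling the space so that $(\partial_j C)^2=-\mathrm{Tr}\bigl(O^{\otimes 2}\mathcal{A}_{\boldsymbol\theta}^{\otimes 2}(\rho_{\mathrm{in}}^{\otimes 2})\bigr)$, and using independence plus the Gaussian characteristic function to get $\mathbb{E}_{\theta_k}[(\mathrm{Ad}_{U_k})^{\otimes 2}]=e^{-\gamma^2\mathcal{L}_k^2/2}$ is all correct (modulo the shared-parameter case, where the layer-by-layer factorization of the expectation breaks and you only wave at it).

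The genuine gap is exactly the one you flag and then do not close: converting ``each averaged channel is within $O(\gamma^2)$ of the identity in operator norm'' into a \emph{multiplicative} loss of at most $1/2$ against a target $(\partial_j C(\boldsymbol{0}))^2$ that may itself be exponentially small. Neither of your proposed repairs works as stated. Cauchy estimates bound derivatives by $\sup|C|\le\|O\|$, which is again an additive-scale bound and hits the same cancellation problem you already identified. More seriously, the final reduction --- that $\mathbb{E}_{\theta\sim\mathcal{N}(0,\gamma^2)}[F(\theta)^2]\ge(1-O(\gamma^2))F(0)^2$ for every real trigonometric polynomial $F$ of bounded degree --- is false. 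Take $F(\theta)=\cos\theta-1+\gamma^2$: then $F(0)^2=\gamma^4$ while a direct computation gives $\mathbb{E}[F^2]=\tfrac{3}{4}\gamma^4+O(\gamma^6)$, a loss by the \emph{constant} factor $3/4$ rather than $1-O(\gamma^2)$. Peeling $L$ variables with such a per-variable loss would yield $(3/4)^L$, which destroys the claimed $1/2$. The actual proof in~\cite{zhang_escaping_2022} cannot and does not rely on ``bounded-degree trig polynomial'' alone; it exploits the specific algebraic form of the dependence of $\partial_j C$ on each angle (which Fourier coefficients survive the symmetric Gaussian average, and the positivity of the surviving squared-sine contributions) to get a genuine $e^{-O(\gamma^2)}$ loss per gate. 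As submitted, your argument establishes the correct framework but not the theorem; the decisive inequality is asserted via a reduction that is provably too lossy.
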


From Theorem~\ref{thm:gauss-vanishing}, we have the following theorem.

\begin{theorem}\label{thm:gaussian_init_compound}
    Consider a variational quantum algorithm using a quantum compound layer and a final observable $Z_m$ (where $X_m,Y_m,Z_m$ correspond to the application of the corresponding Pauli gate on qubit $m$), and let the corresponding output (as a function of the parameters $\boldsymbol{\theta}$) be $C(\boldsymbol{\theta})$. We have
    $$\mathbb{E}_{\boldsymbol{\theta}} \big[\lVert \nabla C \rVert^2\big] = \Omega\left(\frac{1}{\mathrm{poly}(n)}\right),$$ 
    when the parameters are initialized from a normal distribution $\mathcal{N}(0,\gamma^2)$ with $\gamma^2 = \mathcal{O}(1/n^2)$, and the input state is chosen to be $\rho_0 =  |\psi\rangle \langle \psi |$ where $|\psi\rangle$ is an $n$-qubit state representing either the uniform superposition over computational basis states, or the equal superposition of computational basis states with Hamming-weight $k$, for any $1 \le k < n$.
\end{theorem}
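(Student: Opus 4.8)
The plan is to reduce the statement to a single‑parameter gradient evaluation at $\boldsymbol{\theta}=\boldsymbol{0}$ and then invoke Theorem~\ref{thm:gauss-vanishing}. A Brick compound layer on $n$ qubits is a product of $L=\binom{n}{2}$ RBS gates, each a single parametrized two‑qubit gate carrying its own angle, so Theorem~\ref{thm:gauss-vanishing} applies with this $L$, and the hypothesis $\gamma^2=\mathcal{O}(1/n^2)$ is exactly the required $\gamma^2=\mathcal{O}(1/L)$. Consequently $\mathbb{E}_{\boldsymbol{\theta}}[\lVert\nabla C\rVert^2]\ge\tfrac12\,\lVert\nabla C\rVert^2\big|_{\boldsymbol{\theta}=\boldsymbol{0}}\ge\tfrac12\,(\partial C/\partial\theta_j)^2\big|_{\boldsymbol{\theta}=\boldsymbol{0}}$ for any single parameter $\theta_j$, so it suffices to exhibit one parameter whose partial derivative at the origin is $\Omega(1/\mathrm{poly}(n))$ in absolute value.

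For that parameter I would take the angle of an RBS gate $\mathrm{RBS}_{mb}$ acting on the measured qubit $m$ and some neighbour $b$; such a gate is always present in the Brick layout, since otherwise qubit $m$ would never interact and the layer could not realize a generic compound matrix. At $\boldsymbol{\theta}=\boldsymbol{0}$ every RBS gate is the identity, so $\partial_{\theta_j}U\big|_{\boldsymbol{0}}$ collapses to the single generator $G_{mb}:=\ket{0_m 1_b}\bra{1_m 0_b}-\ket{1_m 0_b}\bra{0_m 1_b}$, acting as identity on the other qubits. Differentiating $C(\boldsymbol{\theta})=\mathrm{Tr}[Z_m\,U(\boldsymbol{\theta})\rho_0\,U^{\dagger}(\boldsymbol{\theta})]$ and using $G_{mb}^{\dagger}=-G_{mb}$ and cyclicity of the trace gives $\partial_{\theta_j}C\big|_{\boldsymbol{0}}=\mathrm{Tr}\big[[Z_m,G_{mb}]\,\rho_0\big]=\bra{\psi}[Z_m,G_{mb}]\ket{\psi}$, and a direct two‑qubit computation yields the clean identity $[Z_m,G_{mb}]=X_mX_b+Y_mY_b=2(\ket{0_m 1_b}\bra{1_m 0_b}+\ket{1_m 0_b}\bra{0_m 1_b})$, a ``hopping'' term between qubits $m$ and $b$.

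It then remains to evaluate $\bra{\psi}(X_mX_b+Y_mY_b)\ket{\psi}$ on the two admissible inputs. For the uniform superposition $\ket{\psi}=\ket{+}^{\otimes n}$, one has $\bra{+}X\ket{+}=1$ and $\bra{+}Y\ket{+}=0$, so the value is $1$ and already $\mathbb{E}_{\boldsymbol{\theta}}[\lVert\nabla C\rVert^2]\ge\tfrac12$. For the Hamming‑weight‑$k$ state $\ket{\psi}=\binom{n}{k}^{-1/2}\sum_{|s|=k}\ket{s}$, both $X_mX_b$ and $Y_mY_b$ restrict on the weight‑$k$ subspace to the same within‑subspace hopping operator, and counting the weight‑$k$ strings with $s_m\neq s_b$ (there are $2\binom{n-2}{k-1}$ of them) gives $\bra{\psi}(X_mX_b+Y_mY_b)\ket{\psi}=4\binom{n-2}{k-1}/\binom{n}{k}=4k(n-k)/(n(n-1))$, which for $1\le k\le n-1$ is at least $4/n$. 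Hence $\mathbb{E}_{\boldsymbol{\theta}}[\lVert\nabla C\rVert^2]\ge\tfrac12(4/n)^2=\Omega(1/n^2)$, which proves the theorem.

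I expect the only genuine content to be the commutator identity and the Hamming‑weight‑$k$ (Dicke‑state) expectation; the reduction through Theorem~\ref{thm:gauss-vanishing} is immediate. The point to watch is that, a priori, the $X$‑ and $Y$‑contributions could have cancelled, or the weight‑$k$ overlap could have been exponentially small — the counting is exactly what shows it decays only polynomially (quadratically in the extreme cases $k\in\{1,n-1\}$, and $\Theta(1)$ for balanced $k$). One also needs the minor structural facts that the selected RBS gate genuinely touches qubit $m$ and that the layer's angles are all distinct, so that $\partial_{\theta_j}$ isolates a single generator at the origin.
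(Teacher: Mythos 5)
Your proposal is correct and follows essentially the same route as the paper's proof: reduce to the gradient at $\boldsymbol{\theta}=\boldsymbol{0}$ via Theorem~\ref{thm:gauss-vanishing}, compute the commutator of the RBS generator on qubits $m,b$ with $Z_m$ to get the hopping term $X_mX_b+Y_mY_b$, and evaluate its expectation on the uniform and fixed-Hamming-weight superpositions. Your bookkeeping is in fact cleaner than the paper's (your value $1$ for the uniform superposition and the uniform bound $4k(n-k)/(n(n-1))\ge 4/n$, hence $\Omega(1/n^2)$, versus the paper's $1/4$ and looser $\Omega(1/n^6)$), but the argument is the same.
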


\begin{proof}
    We observe that the number of parameterized gates in a quantum compound neural network is $\mathcal{O}(n^2)$. Using Theorem~\ref{thm:gauss-vanishing}, it suffices for our result to show, for some parameter $\theta_l$, that $$\frac{1}{2}\left(\frac{\partial C}{\partial \theta_l}\right)^{2} \Big\rvert_{\mathbf{\theta}=0} = \Omega\left(\frac{1}{\mathrm{poly}(n)}\right).$$
    
    All the parameterized gates in the quantum compound neural network are RBS gates. Let the gate corresponding to $\theta_l$ act on the qubits $i,j$. The corresponding unitary is
    $$ \text{RBS}_{ij}(\theta_l)  = \begin{pmatrix} 1 & 0 & 0 & 0 \\
     0 & \cos(\theta_l) & \sin(\theta_l) & 0 \\ 
     0 & - \sin(\theta_l) & \cos(\theta_l) & 0 \\
      0 & 0 & 0 & 1 \end{pmatrix}.
    $$
    It can be verified by computation that $\text{RBS}_{ij}(\theta_l) = \exp(-i \theta_l H^{\text{RBS}}_{ij})$ where,  $H^{\text{RBS}}_{ij} = \frac{Y_i \otimes X_j - X_i \otimes Y_j}{2}$. Finally define $U_{-}(\theta_{1\colon l-1}),U_{+}(\theta_{l+1\colon L})$ denote the sections of the parameterized circuit before and after the $l^{\text{th}}$ parameterized gate. By explicit differentiation, we have
    $$
        \left(\frac{\partial C}{\partial \theta_l}\right)\Bigg\rvert_{\mathbf{\theta}=0} = -i  \langle \psi | U_{-}^{\dagger} [H^{\text{RBS}}_{ij}, U_{+}^\dagger Z_m U_{+}] U_{-} |\psi\rangle \Bigg\rvert_{\mathbf{\theta}=0} = -i \langle \psi | [H^{\text{RBS}}_{ij}, Z_m]  |\psi\rangle.
    $$
    
    If $m \ne i,j$, it is clear that $[H^{\text{RBS}}_{ij}, Z_m] = 0$. We therefore consider the case when $m \in \{i,j\}$. In the rest of the proof, we assume without loss of generality that $i=0,j=1,m=0$. We have, $$[H^{\text{RBS}}_{01}, Z_0] = i(Y_0 \otimes Y_1 + X_0 \otimes X_1) = i\begin{pmatrix} 0 & 0 & 0 & 0 \\
         0 & 0 & 2 & 0 \\ 
         0 & 2 & 0 & 0 \\
          0 & 0 & 0 & 0 \end{pmatrix}.$$
    Consider any two computational basis states $|a\rangle, |b\rangle$. Clearly $-i \langle \psi | [H^{\text{RBS}}_{ij}, Z_m]  |\psi\rangle = 2$ if $a=01x, b=10x$ or vice-versa for some $n-2$ bit string $x$, and 0 otherwise. We now determine 
    $\frac{1}{2}\left(\partial C/\partial \theta_l\right)^{2}\rvert_{\mathbf{\theta}=0}$  for different possible initial states $|\psi_0\rangle$. Suppose the initial state is the uniform superposition over all computational basis states $|\psi_0\rangle = \frac{1}{2^{n/2}} \sum_{b \in \{0,1\}^n} \ket{b}$. In this case, 
    $$
        \partial_l C_k
         = \bra{\psi_0} \left( \sigma_X^{(i)} \otimes \sigma_X^{(j)} + \sigma_Y^{(i)} \otimes \sigma_Y^{(j)} \right) \ket{\psi_0} = \frac{1}{4}.
    $$
    Now let the initial state be the $\psi_k$ which is the uniform superposition over all strings of Hamming-weight $k$. For $2 \le k \le n$, we have
    $$ 
      \partial_l C_k
        = \bra{\psi_k} \left( \sigma_X^{(i)} \otimes \sigma_X^{(j)} + \sigma_Y^{(i)} \otimes \sigma_Y^{(j)} \right) \ket{\psi_k} \\
        = 2 \left( \frac{2\binom{n-2}{k-1}}{\binom{n}{k}}\right)^2 = \Omega(\frac{1}{n^6}).
    $$
    By an analogous argument for the Hamming-weight 1 subspace we find that $\partial_l C_k = \Omega(\frac{1}{n^2})$. As we have shown before, the gradients do vanish for the Hamming-weight 0 subspace.
\end{proof}

\section{Quantum Deep Hedging}
\label{sec:quantum-deep-hedging}

In this section, we present a quantum framework for Deep Hedging, referred to as \emph{Quantum Deep Hedging}, where we aim to leverage the power of quantum computing to enhance the deep reinforcement learning methods introduced in~\cite{buehler_deep_2019} for solving the hedging problem. We will incorporate the use of quantum neural networks, as defined in the previous sections, and provide quantum reinforcement learning solutions to this problem. Quantum reinforcement learning involves utilizing quantum computing to improve reinforcement learning algorithms. A comprehensive survey by Meyer et al.~\cite{meyer_survey_2022} outlines various approaches for incorporating quantum subroutines in these algorithms:

\begin{itemize}
    \item \textbf{In classical environments:} A common approach involves using quantum neural networks, such as variational quantum circuits, to represent the value function~\cite{lockwood_playing_2021, chen_variational_2020, lockwood_reinforcement_2020, kwak_introduction_2021} or the policy~\cite{jerbi_parametrized_2021, hsiao_unentangled_2022} in classical environments. These quantum neural networks can replace their classical counterparts in various reinforcement learning training methods, including value-based, policy-based, and actor-critic. Experiments have shown that they sometimes produce better policies or value estimates when applied to small environments, but can face trainability problems with larger environments due to the barren plateaus that occur in these cicuits (as discussed in Section~\ref{sec:quantum-NNs}).
    
    \item \textbf{In quantum environments:} Another approach considers the case of quantum access to the environment and aims to use this access to achieve a significant speed-up by developing a full quantum approach~\cite{cherrat_quantum_2022-1, wang_quantum_2021}. This access can be achieved by oracularizing the environment's components, such as the transition probabilities and reward function. Other methods, based on the gradient estimation algorithm from~\cite{gilyn_optimizing_2019} and  developed in~\cite{jerbi_quantum_2022, Cornelissen2018QuantumGE}, use quantum environments to directly compute the policy gradient as an output of a quantum procedure.
\end{itemize} 

When trying to apply the standard quantum reinforcement learning techniques described above to Deep Hedging, one faces some problems that need to be resolved. One issue is that most quantum neural network models for policies have only been applied to discrete action spaces, while Deep Hedging has a continuous action space with constraints. Additionally, current algorithms for training quantum policies or value functions rely on solving the discounted Bellman equation, which is not suitable for hedging as the goal is defined by a utility function and the value function no longer follows the Bellman equation. Furthermore, building a quantum-accessible environment and approximating the policy gradient with quantum methods also poses a challenge, as these methods require finite action spaces and use amplitude estimation to approximate the value function and its gradient with respect to the policy. 

Therefore, we aim to design a quantum reinforcement learning framework for Quantum Deep Hedging by addressing the challenges of standard quantum reinforcement learning techniques. In the subsequent subsections, we will outline the two methods we developed to overcome these challenges:

\begin{itemize}
    \item \textbf{Using orthogonal layers:} We explore the application of quantum reinforcement learning methods to classical Deep Hedging environments using quantum orthogonal neural network architectures. This is in contrast to prior work that used variational circuits to represent parametrized quantum policies and value functions. To compare these quantum neural network architectures, we implemented policy-search Deep Hedging to train quantum policies and solve the classical environment. By using orthogonal layers, we aim to design a straightforward and effective method for enhancing Deep Hedging with quantum computing.
    \item \textbf{Using compound layers:} We propose a quantum native approach to the Deep Hedging problem, where we formulate Quantum Deep Hedging as a fully quantum reinforcement learning problem and solve it using actor-critic methods. Following the steps in~\cite{buehler_deep_2019, buehler_deep_2019-1}, we construct a quantum environment for Deep Hedging by providing quantum representations of the environment quantities and the trading goal. We then design specific quantum neural networks using compound layers and quantum reinforcement learning algorithms to solve the problem. Our approach is inspired by distributional reinforcement learning and leverages the properties of the quantum environment to provide a model-based quantum-enhanced solution to Deep Hedging.
\end{itemize}

\begin{algorithm}[t!]
	\caption{Policy-Search Deep Hedging with Orthogonal Neural Networks}
	\label{algorithm:policy-search-deep-hedging}
	\begin{algorithmic}
		\STATE {\bfseries input} Policy QNN $\pi$.
		\STATE {\bfseries hyperparameters} Number of episodes per training step $N$.
		\STATE Initialize policy QNN with parameters $\boldsymbol{\phi}$.
		\WHILE{True}
		\FOR{episode $i=1$ {\bfseries to} $N$}
		\FOR{time-step $t=0$ {\bfseries to} $T$}
		\STATE Compute action $a^i_t: =\pi^{\boldsymbol{\phi}_t}_t(s^i_t)$. 
		\STATE Take action $a^i_t$ and receive total reward $r^i_t: = r_t(s_t^i,a_t^i)$.
		\ENDFOR
		\STATE Compute total cumulative return
		$\widetilde{R}_0^i := \sum_{t=0}^T r^i_t$.
		\ENDFOR
		\STATE Update policy parameters $\boldsymbol{\phi}$ with gradient descent to minimize
		$$\widetilde{\mathcal{L}}(\boldsymbol{\phi}) := \frac{1}{\lambda} \log\frac{1}{N}\sum_{i=1}^{N} \exp\big(-\lambda \widetilde{R}_0^i \big).$$
		\ENDWHILE
		\STATE {\bfseries output} Policy parameters $\boldsymbol{\phi}$.
	\end{algorithmic}
\end{algorithm}

\subsection{Quantum Deep Hedging in Classical Environments}
\label{subsec:quantum-deep-hedging-classical-environment}

\subsubsection{Classical Environment for Deep Hedging}

We base our classical environment on the work in~\cite{buehler_deep_2019,buehler_deep_2019-1}. To model the market state $s_t$, we assume that it can be represented by a sequence of market observations $\{M_{t}\}_{t=0}^{T}$ and provide a formal definition of the MDP for the environment described in Section~\ref{subsec:deep-hedging-environment}. At each time-step $t$, the available market information $M_{t}$ is described by $n$ numerical quantities represented by a vector: $$M_t\in\mathbb{R}^n.$$ 

This vector includes market information such as stock prices and other relevant financial data. In this setting, the market state $s_t$ can be identified with the sequence of past and actual market observations $\{M_{t'}\}_{t'=0}^{t}$ up to time-step $t$:
$$s_t = (M_0,M_1,\dots, M_t)\in\mathbb{R}^{n \times (t+1)}.$$ 
Furthermore, there are $m$ available hedging instruments, such as stocks, options, or futures, that can be traded with high liquidity in the market. The classical Deep Hedging environment can be formally defined as a finite-horizon MDP as follows:

\begin{definition}[Classical MDP for Deep Hedging]
\label{def:classical-mdp}

The classical MDP for Deep Hedging is a finite-horizon Markov Decision Process, defined by a tuple $(\mathcal{S},\mathcal{A},p,r,T)$. Here, $\mathcal{S}$ is the market state space, which can be decomposed into subsets $\mathcal{S}_t\subset \mathbb{R}^{n\times(t+1)}$ at each time-step $t$, $\mathcal{A}$ is the trading action space, which can also be decomposed into subsets $\mathcal{A}_t\subset[0,1]^m$, $p$ is the transition model that can be represented by $p_t:\mathcal{S}_t\rightarrow\Delta(\mathcal{S}_{t+1})$, $r$ is the reward function, which can be represented by $r_t:\mathcal{S}_t\times\mathcal{A}_t\rightarrow\mathbb{R}$, and $T\in\mathbb{N}^*$ is the time maturity of all hedging instruments.
\end{definition}

This formulation allows us to model the problem of Deep Hedging formally, where the objective is to choose a sequence of trading actions $\{a_t^\pi\}_{t=0}^T$ that optimizes the risk-adjusted expected returns over the given time horizon $T$, given a sequence of market observations $\{M_t\}_{t=0}^T$. At every time-step, the policy $\pi$ maps the current market state $s_t$ to an action $a^\pi_t$, performing a sequence-to-sequence mapping from $\{M_t\}_{t=0}^T$ to $\{a_t^\pi\}_{t=0}^T$.

Building upon this classical MDP framework for Deep Hedging, we now introduce quantum reinforcement learning methods specifically designed for classical environments, leveraging orthogonal layers to enhance their performance and efficiency.

\subsubsection{Quantum Reinforcement Learning methods for Classical Environments}
\label{subsec:algorithms-classical-e}
Our first approach to Quantum Deep Hedging utilizes quantum orthogonal neural network architectures to parametrize the policy $\pi$. While in this part of our work we focus on parametrizing the policy, this approach could be extended to the value function as well. 

The policy QNN $\pi(.;\boldsymbol{\phi})$ is a sequence-to-sequence model, which can be parametrized with $\boldsymbol{\phi}:=\{\boldsymbol{\phi}_t\}_{t=0}^T$, one per time-step, that can be shared or not depending on the setting, such that $\pi^{\boldsymbol{\phi}_t}_t$ represents the neural network used at time-step $t$ with parameters $\boldsymbol{\phi}_t$. The input time-series data $(M_0, M_1, \dots, M_T)$ is preprocessed classically and transformed into a sequence of embeddings $(\boldsymbol{x}_0, \boldsymbol{x}_1, \dots, \boldsymbol{x}_T)$ in a high-dimensional feature space of dimension $d$. We use the quantum neural networks described in Section~\ref{subsec:qnn-archs-for-timeseries}, which extract features from each $\boldsymbol{x}_t \in \mathbb{R}^d$ and may pass on information across time to produce the final output sequence $(\boldsymbol{y}_0, \boldsymbol{y}_1, \dots, \boldsymbol{y}_T)$. The number of hidden layers used in each neural network architecture is a hyper-parameter governed by factors like the complexity of the learning problem and the availability of resources.  The output $\boldsymbol{y}_t\in\mathbb{R}^d$ can be further processed classically to obtain the desired result, which is the action $\pi_t^{\boldsymbol{\phi}_t}(s_t)\in\mathbb{R}^m$ in this case.

We can train these parametrized quantum policies using the policy-search Deep Hedging algorithm, introduced in~\cite{buehler_deep_2019, buehler_deep_2019-1}, by updating the set of parameters $\boldsymbol{\phi}$ using gradient descent to minimize the policy loss function $\mathcal{L}(\boldsymbol{\phi})$ defined as
$$ \mathcal{L}(\boldsymbol{\phi}):=  \frac{1}{\lambda} \log \mathbb{E}_{s_0} \bigg[ \exp\Big(-\lambda \sum_{t=0}^T r_t\big(s_t, \pi_t^{\boldsymbol{\phi}_t}(\boldsymbol{s}_t)\big)\Big)\bigg|s_0\bigg].$$ 

The training procedure, outlined in Algorithm~\ref{algorithm:policy-search-deep-hedging}, proceeds as follows. At every iteration, it generates $N$ trajectories $\{s_t^i\}_{t=0}^T$ before using the policy QNN to compute the sequence of actions. Using this sequence of actions, we can compute the cumulative return for each episode and then estimate the utility over these episodes to provide an estimate $\widetilde{\mathcal{L}}(\boldsymbol{\phi})$ of the policy loss function defined earlier, and then update the parameters $\boldsymbol{\phi}$.

The same principles of using quantum neural networks that use orthogonal layers within classical architectures can be extended to other deep reinforcement learning algorithms for Deep Hedging, such as actor-critic and value-based methods, as well as future approaches in Deep Hedging.

\subsection{Quantum Deep Hedging in Quantum Environments}
\label{subsec:quantum-deep-hedging-part-2}

Here, we extend the classical Deep Hedging problem into the quantum case, starting by defining the quantum environment and the trading goal for Quantum Deep Hedging. Then, in order to develop quantum algorithms to solve this newly defined trading goal in the quantum environment, we connect to distributional reinforcement learning by showing that the value function in our environment can be expressed using a categorical distribution. 

We will propose a model-based distributional approach to approximate this value function by constructing appropriate quantum unitaries and observables that approximate the value function and its distribution. 
Furthermore, we introduce two quantum algorithms: Quantum Deep Hedging with expected actor-critic and Quantum Deep Hedging with distributional actor-critic, which are summarized in Algorithms~\ref{algorithm:expected-deep-hedging}~and~\ref{algorithm:distributional-deep-hedging}, respectively. These algorithms use quantum neural network architectures with compound layers to approximate the policy and value functions.

\subsubsection{Quantum Environment for Deep Hedging}
\label{subsec:quantum-environment}
We present here a method for converting the classical Deep Hedging problem into a quantum-native setup, which we refer to as the quantum environment for Deep Hedging. In order for this approach to be efficient, we make the state space finite, allowing the market states to be encoded into quantum states of the form $\ket{s_t}$. Moreover, at each time-step, the environment utilizes an oracle $U_t^p$ to map the transition probabilities $p_t$ to the amplitudes of a quantum state, which is a superposition of next states $\ket{s_{t+1}}$. 

 Note that the quantum environment introduced is still solving the classical Deep Hedging task and differs from previous work in quantum reinforcement learning. Previous work, such as~\cite{Cornelissen2018QuantumGE, jerbi_quantum_2022, cherrat_quantum_2022-1, wang_quantum_2021, jiang_quantum_2022}, has employed quantum environments with finite action spaces to create model-based approaches. A key distinction in our approach is that our transition oracles do not necessitate encoding of the action $a_t$, as the Deep Hedging model assumes that actions have no impact on the transition probabilities, i.e., trading actions do not affect the market\cite{murray_deep_2022}. This is a valuable feature of our approach, as it removes the need for a quantum encoding of actions. Additionally, our approach does not require quantum access to the reward function through oracles. While previous work encodes the reward function into parts of a quantum state, for example in the amplitudes~\cite{cherrat_quantum_2022-1}, in quantum registers~\cite{wang_quantum_2021, jerbi_quantum_2022}, or in the phases of the quantum states~\cite{Cornelissen2018QuantumGE}, it is unclear how to achieve this efficiently for the reward function associated with Deep Hedging and which is defined on a continuous action space. Consequently, we present an alternative formulation of quantum environments that incorporates the unique structure of the Deep Hedging MDP, as detailed in Definition~\ref{def:classical-mdp}. 

To formally specify our quantum environment, we assume that the market information $M_t$ at each time-step $t$ can be represented as a $n$-bit binary string, which we encode in an $n$-qubit computational basis state $\ket{M_t}$. Note that $\braket{M'_t}{M_t}=0$ if $M'_t\neq M_t$. The quantum encoding $\ket{s_t}$ of the market state $s_t$ can then be expressed as follows:
$$\ket{s_t}= \ket{M_0}\otimes\ket{M_1}\otimes \cdots \otimes \ket{M_t} \in \mathcal{H}^{\otimes n\times (t+1)}.$$
With the formalism described above, the oracle $U_t^p$ encoding transitions probabilities as described by a classical transition function $p_t$ can be written as follows:
$$ U_t^p: \ket{s_t}\otimes\ket{0}^{\otimes n} \xrightarrow[]{}\ket{(\boldsymbol{s}_{t+1}|s_t)}:= \sum_{s_{t+1}}\sqrt{p_t(s_{t+1}|s_t)}\ket{s_{t+1}}. $$

We will now redefine the trading goal in the context of the quantum environment for Deep Hedging. While our focus is on the exponential utility $\EE_\lambda$ as defined in Section~\ref{subsec:deep-hedging-trading-goal}, our approach can be applied to any risk measure that can be expressed as the expectation of a deterministic function over future returns. To evaluate the value function for a given policy $\pi$, we need to compute the expectation of the exponentiated rewards over future returns. Specifically, for a state $s_t$, the random variable $\exp(-\lambda R_t^\pi(\boldsymbol{s}_T))|s_t$, representing the exponentiated return, is a discrete distribution with values $\{R_t^\pi(s_T)\:|\:s_T\in\mathcal{S}_T\}$ and probabilities $p_t(s_T|s_t)$. We can express its expectation as the measurement of a quantum observable $O_t^{\pi}$ in the quantum state $\ket{(\boldsymbol{s}_T|s_t)}$, where
$$ O^{\pi}_t := \sum_{s_T} \exp(-\lambda R_t^\pi(s_T))\ket{s_T}\bra{s_T}, $$
and $\ket{(\boldsymbol{s}_T|s_t)}$ is defined similarly to $\ket{(\boldsymbol{s}_{t+1}|s_t)}$ as
$$\ket{(\boldsymbol{s}_T|s_t)}:=\sum_{s_T}\sqrt{p_t(s_T|s_t)}\ket{s_T} \in \mathcal{H}^{\otimes n\times (T+1)}.$$ This quantum state encodes the probabilities $p_t(s_T|s_t):=\mathbb{P}[s_T|s_t]$ in a superposition of all possible trajectories $(s_{t+1}, \dots, s_T)$ of length $T-t$ and can be prepared by sequentially applying oracles $U^p_{t}, U^p_{t+1}, \dots, U^p_{T-1}$ to $\ket{s_t}$ and $n \times (T-t)$ ancilla qubits.
Denoting $\rho_t(\boldsymbol{s}_T|s_t):=\ket{(\boldsymbol{s}_T|s_t)}\bra{(\boldsymbol{s}_T|s_t)}$, we have
$$ \text{{\normalfont Tr}}[O^{\pi}_t\rho_t(\boldsymbol{s}_T|s_t)]= \sum_{s_T} p_t(s_T|s_t) \times \exp(-\lambda R_t^\pi(s_T)) = \mathbb{E}[\exp(-\lambda R_t^\pi(\boldsymbol{s}_T)|s_t]. $$
Using this observable, we can redefine the trading goal in the quantum environment for Deep Hedging as finding an optimal policy $\pi^*$ such that
\[
    \forall t, \: \forall s_t \in \mathcal{S}_t, \quad v^*_t(s_t) = - \frac{1}{\lambda} \log \left\{ \inf_\pi  \text{{\normalfont Tr}}\left[O^{\pi}_t\rho_t(\boldsymbol{s}_T|s_t)\right] \right\}.
\]

Our next goal is to develop quantum algorithms to solve this newly defined trading goal in the quantum environment. In other words, our objective is to find the optimal policy $\pi^*$ that minimizes the logarithm of the expectation of the quantum observable $O^{\pi}_t$, which represents the exponentiated rewards over future returns.  
To design our quantum algorithm, we now connect to distributional reinforcement learning by showing that the value function in our environment can be expressed using a categorical distribution. 

\subsubsection{Connection with Distributional Reinforcement Learning}

The connection between our proposed Quantum Deep Hedging approach and distributional reinforcement learning lies in the definition of value functions. In distributional reinforcement learning, the focus is on learning the probability distribution of the returns, as opposed to just the expected return value. This is done using neural networks~\cite{bellemare_distributional_2017, dabney_distributional_2018} that approximate the return distribution using categorical distributions. Similarly, in our Quantum Deep Hedging approach, the quantum observable $O^{\pi}_t$ can be interpreted as a categorical distribution over all possible future returns, and our algorithms are designed to approximate this distribution and find the optimal policy $\pi^*$ that minimizes the logarithm of the expectation of this distribution. Therefore, distributional reinforcement learning provides a useful framework for our approach.

In distributional reinforcement learning, value functions are defined using distributions, and categorical distributions are a common choice in many approximation schemes.  A categorical distribution $\mathcal{P}^{\mathbf{z}}$, with a finite support $\mathbf{z}=\{z_1,z_2,\dots,z_K\}$, is defined as a mixture of Dirac measures on each element of $\mathbf{z}$ and has the form $\mathcal{P}^{\mathbf{z}} := \sum_{i} p_i \delta_{z_i}$, where $p_i\geq 0$, $\sum_i p_i = 1$, and $\delta_{z_i}$ is the Dirac measure on $z_i$~\cite{lyle_comparative_2019}. In other words, a categorical distribution is a probability distribution over a finite set of discrete outcomes. To formalize the notion of approximating distributions, we will use the Cram\'{e}r distance (or $\ell_2$ metric) defined as follows:

\begin{definition}[Cram\'{e}r distance]
    Given two distributions $\mathcal{P},\mathcal{Q}$ over subsets of $\mathbb{R}$, with cumulative distribution functions (over $\mathbb{R}$) given by $F_{\mathcal{P}},F_{\mathcal{Q}}$ respectively, the Cram\'{e}r distance between the two distributions is defined as
    $$
        C_2(\mathcal{P},\mathcal{Q}) = \left(\int_\mathbb{R} |F_{\mathcal{P}}(x) - F_{\mathcal{Q}}(x)|^2 \,dx\right)^{1/2}.
    $$
\end{definition}

\begin{figure}[t!]
    \centering
    \begin{subfigure}{\textwidth}
    \centering
    \includegraphics[width=0.6\textwidth]{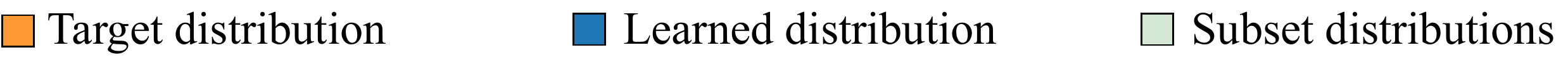}
    \end{subfigure}

    \centering
    \begin{subfigure}{0.49\textwidth}
    \centering
    \includegraphics[height=14em]{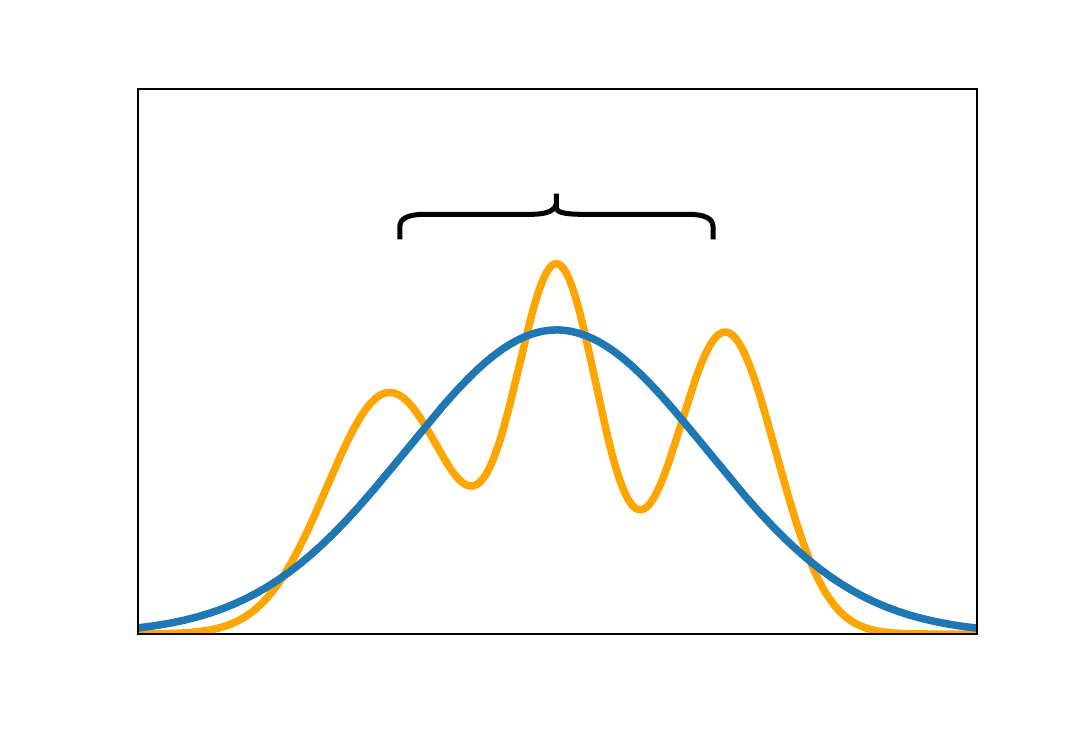} 
    \caption{\label{subfig:exp}Learning over $\mathcal{S}_T$.}
    \end{subfigure}
    \begin{subfigure}{0.49\textwidth}
    \centering
    \includegraphics[height=14em]{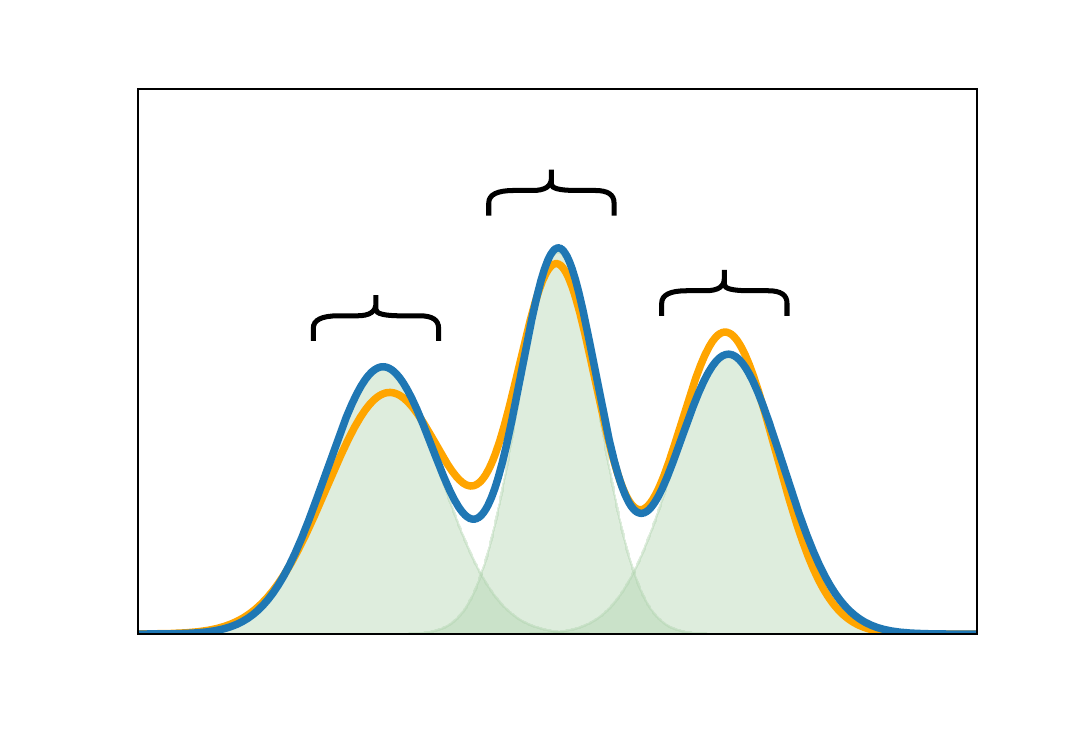}
    \caption{\label{subfig:distr} Learning over multiple subsets of $\mathcal{S}_T$.}
    \end{subfigure}
    \caption{An example illustrating the process of learning expectations for each subset. The figures depict data generated from a trimodal distribution. In (\ref{subfig:exp}), a single distribution is learned to match the expectation over the entire set of states. In (\ref{subfig:distr}), the set is divided into three distinct subsets, with each peak representing the weighted and learned distribution within its corresponding subset. This improved approach provides a closer fit to the original data and effectively incorporates information about tails, offering a more accurate representation of the underlying distribution.}
    \label{fig:bimodal}
\end{figure}

An important result from the distributional reinforcement learning literature~\cite[Proposition 1]{lyle_comparative_2019} shows the following properties of the Cram\'{e}r distance: A categorical distribution $\mathcal{P}^{\mathbf{z}}$ over some support $\mathbf{z}$ can be projected onto a categorical distribution over a different support $\mathbf{z}'$ by a mapping $\Pi_C$ (called the Cram\'{e}r projection) that preserves the expectation ($\mathbb{E}[\mathcal{P}^{\mathbf{z}}]=\mathbb{E}[\Pi_C(\mathcal{P}^{\mathbf{z}})]$) while minimizing the Cram\'{e}r distance between $\mathcal{P}^{\mathbf{z}}$ and $\Pi_C(\mathcal{P}^{\mathbf{z}})$, as long as the new support $\mathbf{z}'$ has a larger range, i.e $[\min\mathbf{z},\max\mathbf{z}]\subset[\min\mathbf{z}',\max\mathbf{z}']$.

To establish a connection between our framework and distributional reinforcement learning, we can utilize the fact that sampling from a categorical distribution can be achieved by measuring a quantum observable. Specifically, the observable $O^{\mathbf{z}}$, defined as 
$$O^{\mathbf{z}}:=\sum_{b\in\{0,1\}^m}z_b \ket{b}\bra{b},$$ 
can be applied on the quantum state $\ket{\mathbf{z}}:=\sum_{b}\sqrt{p_b}\ket{b}$ to sample from the categorical distribution $\mathcal{P}^{\mathbf{z}}$ with support $\mathbf{z}$. Here, the number of qubits $m$ required to index all the outcomes is such that $K\leq 2^m$. Thus, measuring $O^\mathbf{z}$ in $\ket{\mathbf{z}}$ serves as a quantum representation of the distribution $\mathcal{P}^{\mathbf{z}}$ over the support $\mathbf{z}$.

In the Quantum Deep Hedging framework, the categorical distribution that models the returns distribution of a policy $\pi$ at a time-step $t$ can be represented by measuring $O_{t}^{\pi}$, the quantum observable defined in Section~\ref{subsec:quantum-environment}, in the quantum state $\ket{(\boldsymbol{s}_T|s_t)}$ that encodes the probabilities of the future trajectories given the current state $s_t$. 

Constructing the quantum state $\ket{(\boldsymbol{s}_T|s_t)}$ requires quantum access to the environment, and it can be generated using the transition oracles $\{U^p_{t'}\}_{t'=t}^{T}$ and measured with the observable $O^{\pi}_{t}$. However, the observable $O^{\pi}_t$ is not efficient to implement as its natural description requires classical computation of its eigenvalues and is dependent on the policy $\pi$, which changes during the training procedure. To address this, we propose a construction to approximately represent the distribution using a much simpler observable that is diagonal in the computational basis and has a spectrum independent of the specific reward structure.  
\newpage
This construction is detailed in the following proposition that demonstrates,  using the Cram\'{e}r projection, how an observable with a fixed support can approximate the value distribution while preserving its expectation. 

\begin{proposition}
    \label{prop:cramer}
    Consider a support $\mathbf{z}$ of size $2^m$ such that, for any policy $\pi$, the following holds:
    $$ \forall \: s_T \in\mathcal{S}_T, \quad  \min_{b\in\{0,1\}^{m}}z_b \leq \exp(-\lambda R^\pi_t(s_T)) \leq \max_{b\in\{0,1\}^{m}}z_b,$$
    where $m\geq 1$. Then, there exists an observable $O^{\mathbf{z}}_t$ with eigenvalues in $\mathbf{z}$ that operates on $n\times(t+1) +m$ qubits and such that, for any deterministic policy $\pi$, there is a unitary $U^\pi_t$ satisfying
    $$ \forall s_t \in \mathcal{S}_t, \quad \text{\normalfont Tr}[O^{\mathbf{z}}_t\rho_t(\mathbf{z}|s_t) ] = \text{{\normalfont Tr}}[O^{\pi}_t\rho_t(\boldsymbol{s}_T|s_t)], $$
    where $\ket{(\mathbf{z}|s_t)}:=U_t^\pi (\ket{s_t}\otimes\ket{0}^{\otimes m})$ and $\rho_t(\mathbf{z}|s_t):=\ket{(\mathbf{z}|s_t)}\bra{(\mathbf{z}|s_t)}$.

    Additionally, let $\mathcal{P}^\pi_t$ and $\mathcal{P}^{\mathbf{z}}_t$ denote the distributions of the value function (outcome of measuring $O^{\pi}_{t}$ in $\ket{(\boldsymbol{s}_T|s_t)}$) and the corresponding categorical projection onto the fixed support $\mathbf{z}$ (outcome of measuring $O^{\mathbf{z}}_t$ in $\ket{(\mathbf{z}|s_t)}$), respectively. If the cumulative distribution function of $\mathcal{P}^\pi_t$ is $L$-Lipschitz, then the Cram\'{e}r distance between $\mathcal{P}^\pi_t$ and $\mathcal{P}^{\mathbf{z}}_t$ is such that: $$C_{2}(\mathcal{P}^\pi_t,\mathcal{P}^{\mathbf{z}}_t) \leq LZ^{3/2}/3 \cdot 2^m,$$ where $Z = (\max_{b\in\{0,1\}^{m}}z_b - \min_{b\in\{0,1\}^{m}}z_b)$.
\end{proposition}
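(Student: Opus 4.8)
## Proof Proposal

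The plan is to prove the two parts of the proposition separately: first construct the observable $O^{\mathbf{z}}_t$ and the unitary $U^{\pi}_t$ that exactly reproduce the expectation, and then bound the Cram\'{e}r distance introduced by projecting onto the fixed support $\mathbf{z}$.

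\textbf{Part 1: Construction and exactness of the expectation.} The idea is to let $O^{\mathbf{z}}_t := \sum_{b \in \{0,1\}^m} z_b \ket{b}\bra{b}$ act on the $m$ ancilla qubits (tensored with identity on the $n \times (t+1)$ state qubits), so that its spectrum lies in $\mathbf{z}$ by construction and is manifestly independent of the policy and reward structure. The unitary $U^{\pi}_t$ must map $\ket{s_t} \otimes \ket{0}^{\otimes m}$ to a state $\ket{(\mathbf{z}|s_t)}$ whose amplitudes on the ancilla register realize the Cram\'{e}r projection of the return distribution $\mathcal{P}^{\pi}_t$ onto $\mathbf{z}$. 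Concretely, for each $s_t$, the distribution of $\exp(-\lambda R^{\pi}_t(\boldsymbol{s}_T))$ given $s_t$ is the categorical distribution $\sum_{s_T} p_t(s_T|s_t)\,\delta_{\exp(-\lambda R^{\pi}_t(s_T))}$; applying the Cram\'{e}r projection $\Pi_C$ (which is well-defined since the support condition $\min_b z_b \le \exp(-\lambda R^{\pi}_t(s_T)) \le \max_b z_b$ guarantees the range of $\mathbf{z}$ covers the range of the return) yields a categorical distribution $\sum_b q_b(s_t)\,\delta_{z_b}$. I then define $U^{\pi}_t$ to be any unitary extending the assignment $\ket{s_t}\otimes\ket{0}^{\otimes m} \mapsto \ket{s_t}\otimes \sum_b \sqrt{q_b(s_t)}\,\ket{b}$ (such an extension exists since the target states are orthonormal across distinct $s_t$, using $\braket{M'_t}{M_t}=0$). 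Then $\mathrm{Tr}[O^{\mathbf{z}}_t \rho_t(\mathbf{z}|s_t)] = \sum_b q_b(s_t) z_b = \mathbb{E}[\Pi_C(\mathcal{P}^{\pi}_t)] = \mathbb{E}[\mathcal{P}^{\pi}_t] = \mathrm{Tr}[O^{\pi}_t \rho_t(\boldsymbol{s}_T|s_t)]$, where the middle equality is exactly the expectation-preservation property of the Cram\'{e}r projection quoted from \cite[Proposition 1]{lyle_comparative_2019}.

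\textbf{Part 2: Cram\'{e}r distance bound.} Here $\mathcal{P}^{\mathbf{z}}_t = \Pi_C(\mathcal{P}^{\pi}_t)$ by the construction above, so I need to bound $C_2(\mathcal{P}^{\pi}_t, \Pi_C(\mathcal{P}^{\pi}_t))$. The support $\mathbf{z}$ has $2^m$ points spanning an interval of length $Z$; in the worst case (and implicitly in the regular-grid case the proposition has in mind) the spacing between consecutive support points is of order $\Delta := Z/(2^m - 1) \le Z/2^m$ up to constants. The Cram\'{e}r projection replaces each mass by splitting it linearly onto the two nearest grid points, which is equivalent to replacing the CDF $F_{\mathcal{P}^{\pi}_t}$ by its piecewise-linear interpolant through the grid points (this is the standard characterization of $\Pi_C$ as the $L^2$-projection of the CDF onto piecewise-linear functions with the given breakpoints). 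So $|F_{\mathcal{P}^{\pi}_t}(x) - F_{\mathcal{P}^{\mathbf{z}}_t}(x)|$ is the interpolation error of a linear interpolant of an $L$-Lipschitz function on cells of width $\le \Delta$, hence bounded pointwise by $L\Delta$ (in fact one can do better, but $L\Delta$ suffices). Integrating: $C_2^2 = \int |F_{\mathcal{P}^{\pi}_t} - F_{\mathcal{P}^{\mathbf{z}}_t}|^2\,dx \le (L\Delta)^2 \cdot Z$ since the integrand is supported on an interval of length $Z$, giving $C_2 \le L\Delta\, Z^{1/2} \le L Z^{3/2}/2^m$; tracking the constant from the sharper per-cell bound $\tfrac{1}{3}$-type estimate for the $L^2$ interpolation error of a Lipschitz function (or from the exact $\int_0^\Delta (\text{linear error})^2$ computation) yields the stated $L Z^{3/2}/(3 \cdot 2^m)$.

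\textbf{Main obstacle.} The delicate part is justifying that the Cram\'{e}r projection $\Pi_C$ acts as piecewise-linear interpolation of the CDF and then getting the constant exactly right: the naive pointwise bound $L\Delta$ on each cell overestimates, and one must instead use that on a cell of width $\Delta$ the squared $L^2$ error of interpolating an $L$-Lipschitz function integrates to something like $\tfrac{1}{3}L^2\Delta^3$ per cell (or handle the general, possibly non-uniform support carefully), and sum over the $\sim 2^m$ cells to land on $\tfrac{1}{9}L^2 Z^3/2^{2m}$ before taking the square root. I would also double-check the edge behavior of $\Pi_C$ at $\min\mathbf{z}$ and $\max\mathbf{z}$, where the support condition ensures no mass is clipped, so the CDF agrees exactly outside $[\min\mathbf{z},\max\mathbf{z}]$ and the integral genuinely restricts to that length-$Z$ window.
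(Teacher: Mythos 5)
Your Part~1 matches the paper's proof essentially step for step: the same observable $O^{\mathbf{z}}_t = I^{\otimes n(t+1)}\otimes\sum_{b}z_b\ket{b}\bra{b}$ on the ancilla register, the same unitary extending $\ket{s_t}\otimes\ket{0}^{\otimes m}\mapsto\ket{s_t}\otimes\sum_b\sqrt{q_b(s_t)}\ket{b}$ (with well-definedness from the orthogonality of the $\ket{s_t}$), and the expectation identity from the expectation-preserving property of $\Pi_C$ quoted from Lyle et al. That part is fine.

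The gap is in Part~2, and it is precisely the step you yourself flag as delicate: the Cram\'{e}r projection is \emph{not} piecewise-linear interpolation of the CDF. $\Pi_C(\mathcal{P}^\pi_t)$ is a discrete distribution supported on the grid $\mathbf{z}$, so its CDF is a step function with jumps at the $z_b$; it cannot coincide with a continuous piecewise-linear interpolant of $F_{\mathcal{P}^\pi_t}$. (The correct characterization, obtained by minimizing $\int (F-G)^2$ over step functions with jumps on the grid, is that the projected CDF on each cell $[z_b,z_{b+1})$ equals the \emph{average} of $F_{\mathcal{P}^\pi_t}$ over that cell; the mass-splitting rule you quote is the measure-level statement of exactly this.) As written, your integral bounds the $L^2$ distance from $F_{\mathcal{P}^\pi_t}$ to its linear interpolant, which is a different quantity from $C_2(\mathcal{P}^\pi_t,\Pi_C(\mathcal{P}^\pi_t))$, so the stated conclusion does not follow from your computation. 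There are two clean repairs: (a) use the correct cell-average characterization, for which the $L$-Lipschitz hypothesis still gives a per-cell squared error of order $L^2\Delta^3$ with $\Delta = Z/2^m$; or (b) do what the paper does --- invoke only the \emph{minimality} of $\Pi_C$ in Cram\'{e}r distance and compare against an explicit crude projection, namely the one that moves all mass in $(z_{b-1},z_b]$ onto $z_b$. That projection's CDF agrees with $F_{\mathcal{P}^\pi_t}$ at every grid point and is constant on each cell, so each cell contributes at most $\int_{z_b}^{z_b+\Delta}L^2(x-z_b)^2\,dx = L^2\Delta^3/3$; summing the $2^m$ cells and taking the square root yields the claimed bound (up to a $\sqrt{3}$-versus-$3$ constant that is equally loose in the paper's own write-up). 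Your observation about the edge behavior is correct and is exactly what the support condition guarantees.
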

\begin{proof}

Given a policy $\pi$ and a state $s_t\in\mathcal{S}_t$, the distribution of the exponentiated returns is a categorical distribution, denoted as $\mathcal{P}^\pi_t$, from which we can obtain a categorical distribution over a support $\mathbf{z}$ using the Cram\'{e}r projection. 
Specifically, we can project $\mathcal{P}^\pi_t$ onto the support $\mathbf{z}$ to obtain $\mathcal{P}^{\mathbf{z}}_t:=\Pi_{C}(\mathcal{P}^\pi_t)$, which returns a $z_b\in\mathbf{z}$ with probability $p(z_b|s_t)$. We can then define a unitary that maps $\ket{s_t}\ket{0}^{\otimes m}$ to $\ket{s_t}\ket{(\mathbf{z}|s_t)}$, where $\ket{(\mathbf{z}|s_t)}:=\sum_{b}\sqrt{p(z_b|s_t)}\ket{b}$ is a quantum state encoding the probabilities of $\mathcal{P}^{\mathbf{z}}_t$. The claim for a particular state $s_t$ is satisfied when we measure this unitary with the observable $$ O^{\mathbf{z}}_t := I^{\otimes n\times (t+1)} \otimes  \sum_{b\in\{0,1\}^n} z_b \ket{b}\bra{b},$$where $I$ is the identity operator acting on one qubit. Since the different quantum encodings $\ket{s_t}$ are orthogonal, a unitary $U_t^\pi$ that performs the Cram\'{e}r projection for all $s_t\in\mathcal{S}_t$ can be constructed. By known properties of the Cram\'{e}r projection, the first requirement of matching expectations is satisfied.

We now consider the Cram\'{e}r distance between the true and projected distributions. As the Cram\'{e}r projection minimizes Cram\'{e}r distance we obtain an upper bound by analyzing the distance from any projection onto the same support. Consider the projection assigns to each $z_b$ the weight of the true distribution between $z_b$ and $z_{b-1}$ (where the subtraction is performed by viewing $b$ as the binary representation of an integer). Let the true and projected cdfs be $F_{\mathcal{P}^{\pi}_t}$ and $F_{\mathcal{P}^{\mathbf{z}}_t}$ respectively. The square of the Cram\'{e}r distance between these distributions is the sum of $2^{m}$ integrals of the form: $$\int_{z_b}^{z_b + Z/2^m} (F_{\mathcal{P}^{\pi}_t}(x) - F_{\mathcal{P}^{\mathbf{z}}_t}(z_b))^2 dx \le L^2 Z^3/ 3\cdot2^{3m}.$$ Accumulating the $2^m$ terms and taking the square root, we have the necessary bound on the Cram\'{e}r distance.
\end{proof}

Proposition~\ref{prop:cramer} illustrates the use of quantum circuits to approximate the distributional value function by fixing a support $\mathbf{z}$, and measuring the observable $O_t^{\mathbf{z}}$ in the quantum states $\ket{(\mathbf{z}|s_t)}$ produced by the unitary $U_t^\pi$. We can hope to learn $U_t^\pi$ by using existing classical distributional reinforcement learning algorithms for our quantum setting. However, this approach may be challenging for two reasons. First, most of the existing approaches in distributional reinforcement learning assume that the value function conforms to the discounted Bellman equation, which is not our case since we need to take into account the risk-adjusted measure. Second, the size of the quantum support increases exponentially with the number of qubits, making training impractical.

To overcome these challenges, we propose a new approach that utilizes the environment's model and exploits the structure and properties of our Hamming-weight preserving unitaries. This approach allows us to learn polynomial-sized distributions rather than exponential ones.

\subsubsection{Distributional Value Function Approximation}
\label{sec:distributional-value}

We propose a model-based approach to learn distributional value functions in quantum environments. We introduce a new distributional reinforcement learning algorithm that differs from existing methods, in particular from~\cite{bellemare_distributional_2017} that fixes the support $\mathbf{z}$ and learns the probabilities for each element in the support and from~\cite{dabney_distributional_2018} that fixes the probabilities (or quantiles) and learns the support. Our approach in fact splits the set of future trajectories into subsets and learns the expectation of the distribution within each subset. Because our approach is model-based, we can use the model to compute the probabilities of these subsets and calculate the overall expectation as well. 

This approach can use any type of unitaries that preserve subspaces of some nature, while here we give a specific example using the Hamming-weight preserving quantum compound neural networks developed in Section~\ref{subsec:quantum-compound-nns}. As we have seen, these compound neural networks preserve the Hamming-weight subspaces and thus allow us to split the set of future trajectories according to the Hamming-weight of their quantum representation.  

In more detail, we partition the set of all complete trajectories $\mathcal{S}_T$ into $n(T-t)+1$ disjoint subsets $\mathcal{S}_T^{t,k}$, where $k=0,\dots,n(T-t)$, such that each subset contains terminal states with a Hamming-weight of $k$ in their trajectories from $t+1$ to $T$. This allows us to decompose the superposition $\ket{(\boldsymbol{s}_T|s_t)}$ of all trajectories by grouping the future trajectories by Hamming-weight. Specifically, we express $\ket{(\boldsymbol{s}_T|s_t)}$ as a sum of terms corresponding to each subset, with each term given by $\ket{(\boldsymbol{s}_T|s_t,k)}$, where $k$ is the Hamming-weight of the trajectories in the subset. We learn one expectation per subset, and by computing the probability of each subset, we can recover the overall expectation over all subsets. The probability that the future trajectory will have Hamming-weight $k$ is denoted by $p_t(s_t,k):=\mathbb{P}[|\boldsymbol{s}_T|-|s_t|=k]$, and $\ket{(\boldsymbol{s}_T|s_t,k)}$ is the superposition of such trajectories defined as: $$\ket{(\boldsymbol{s}_T|s_t,k)} := \frac{1}{\sqrt{p_t(s_t,k)}}\sum_{s_T\in\mathcal{S}_{T}^{t,k}} \sqrt{p_t(s_T|s_t)}\ket{s_T}.$$

In what follows, we will show that there exist Hamming-weight preserving unitaries that can approximate the expected value for each subset of future trajectories grouped by Hamming-weight, extending the result of Proposition~\ref{prop:cramer}.

\begin{proposition}
    \label{prop:cramer-bis}
    Consider a support $\mathbf{z}$ of size $2^{n(T-t)+2}$
    such that, for any policy $\pi$ and for any Hamming-weight $k=0,\dots,n(T-t)$, the following holds:
    $$\forall \: s_T \in\mathcal{S}^{t,k}_T, \quad  \min_{|b|=k+1}z_b \leq \exp(-\lambda R^\pi_t(s_T)) \leq \max_{|b|=k+1}z_b$$ 
    Then, there exists an observable $O^{\mathbf{z}}_t$ with eigenvalues in $\mathbf{z}$ that operates on $n\times(T+1)+2$ qubits and such that, for any deterministic policy $\pi$, there is a Hamming-weight preserving unitary $U^\pi_t$ satisfying:
    $$ \forall k, \:\: \forall s_t \in \mathcal{S}_t, \quad \text{\normalfont Tr}[O^{\mathbf{z}}_t \rho_t(\mathbf{z}|s_t,k+1)] = \text{{\normalfont Tr}}[O^{\pi}_t\rho_t(\boldsymbol{s}_T|s_t,k)] $$
    where $\ket{(\mathbf{z}|s_t,k+1)}:=U_t^\pi(\ket{(\boldsymbol{s}_T|s_t,k)}\otimes\ket{01})$ and $\rho_t(\mathbf{z}|s_t,k+1):=\ket{(\mathbf{z}|s_t,k+1)}\bra{(\mathbf{z}|s_t,k+1)}$. 
\end{proposition}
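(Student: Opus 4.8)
The plan is to mirror the proof of Proposition~\ref{prop:cramer}, but now applying the Cram\'{e}r projection \emph{separately within each Hamming-weight subset} and arranging the targets of these projections so that the resulting quantum states live in the correct Hamming-weight subspaces. First I would fix a policy $\pi$, a time-step $t$, and a Hamming-weight $k \in \{0,\dots,n(T-t)\}$, and consider the conditional distribution of exponentiated returns restricted to the trajectories in $\mathcal{S}_T^{t,k}$, i.e.\ the categorical distribution obtained by measuring $O_t^\pi$ in the state $\ket{(\boldsymbol{s}_T|s_t,k)}$. By the hypothesis, the values $\exp(-\lambda R_t^\pi(s_T))$ for $s_T \in \mathcal{S}_T^{t,k}$ all lie in $[\min_{|b|=k+1}z_b,\ \max_{|b|=k+1}z_b]$, i.e.\ within the range of the part of the support $\mathbf{z}$ indexed by strings of Hamming-weight exactly $k+1$. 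Hence the Cram\'{e}r projection of this conditional distribution onto the sub-support $\{z_b : |b| = k+1\}$ is well-defined and, by the cited properties of $\Pi_C$, preserves the conditional expectation: $\mathbb{E}[\Pi_C(\mathcal{P})] = \mathbb{E}[\mathcal{P}]$, which is exactly $\text{Tr}[O_t^\pi \rho_t(\boldsymbol{s}_T|s_t,k)]$.

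Next I would package these per-subset projections into a single unitary. For each $s_t$ and each $k$, the Cram\'{e}r projection yields probabilities $p(z_b|s_t,k)$ supported on $\{b : |b| = k+1\}$, and I would define $\ket{(\mathbf{z}|s_t,k+1)} := \sum_{|b|=k+1}\sqrt{p(z_b|s_t,k)}\,\ket{b}$, a state supported entirely in the Hamming-weight-$(k+1)$ subspace. The key observation for the Hamming-weight preservation claim is the role of the two ancilla qubits initialized in $\ket{01}$: the input state $\ket{(\boldsymbol{s}_T|s_t,k)}$ has Hamming-weight-$k$ support in the trajectory register, so $\ket{(\boldsymbol{s}_T|s_t,k)}\otimes\ket{01}$ has total Hamming-weight $k+1$; the target $\ket{s_t}$ (or rather its fixed-weight prefix) tensored with $\ket{(\mathbf{z}|s_t,k+1)}$ must be arranged to also have total Hamming-weight $k+1$. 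Because the input states $\ket{(\boldsymbol{s}_T|s_t,k)}$ for distinct $(s_t,k)$ are mutually orthogonal and each lands in a definite Hamming-weight sector, there is a partial isometry — extendable to a full unitary $U_t^\pi$ — that maps each such input to its target while preserving Hamming-weight globally; I would spell out that $U_t^\pi$ can be taken to act block-diagonally on the Hamming-weight sectors, which is exactly the structure of a compound-layer-type unitary. Finally, choosing $O_t^{\mathbf{z}} := I^{\otimes n\times(T+1)} \otimes \sum_{b}z_b\ketbra{b}$ on the $2^{n(T-t)+2}$-dimensional support register and invoking the expectation-preservation property gives $\text{Tr}[O_t^{\mathbf{z}}\rho_t(\mathbf{z}|s_t,k+1)] = \text{Tr}[O_t^\pi\rho_t(\boldsymbol{s}_T|s_t,k)]$ for all $k$ and all $s_t$, as required.

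The main obstacle I anticipate is the bookkeeping around the ancilla qubits and the precise dimension count: one has to verify that $2^{n(T-t)+2}$ support points genuinely suffice, that the sub-support $\{z_b : |b| = k+1\}$ is nonempty and large enough for every $k$ (the shift from $k$ to $k+1$ is what forces the ``$+2$'' rather than ``$+1$'' ancillas, since $k$ can be $0$ and we need weight-$1$ strings available, and $k$ can be $n(T-t)$ and we need weight-$(n(T-t)+1)$ strings available within $n(T-t)+2$ bits), and that the Cram\'{e}r projection's range-containment hypothesis $[\min\mathbf{z},\max\mathbf{z}]\subset[\min\mathbf{z}',\max\mathbf{z}']$ is met by the assumed inequalities. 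The rest — orthogonality of the $\ket{s_t}$, existence of a Hamming-weight-preserving unitary extending a partial isometry between fixed-weight subspaces, and expectation preservation of $\Pi_C$ — follows exactly as in Proposition~\ref{prop:cramer}, so I would keep that part brief and reference it directly.
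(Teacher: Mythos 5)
Your proposal is correct and follows essentially the same route as the paper's proof: apply the Cram\'{e}r projection separately within each Hamming-weight subset onto the sub-support $\{z_b : |b|=k+1\}$, encode the projected probabilities in a state of definite Hamming-weight $k+1$, and use the mutual orthogonality of the inputs across distinct $(s_t,k)$ to assemble a single block-diagonal (Hamming-weight preserving) unitary, with expectation preservation of $\Pi_C$ giving the trace identity. The additional bookkeeping you flag about the two ancilla qubits matches the remarks the paper places immediately after its proof (the ancillas guarantee every sub-support has size at least $2$, which fails for $k=0$ and $k=n(T-t)$ without them).
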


\begin{proof}
Given a policy $\pi$ and a state $s_t\in\mathcal{S}_t$, the distribution of the exponentiated returns restricted to future paths with Hamming-weight $k$ is also a categorical distribution, denoted as $(\mathcal{P}^{\pi}_t)^{(k)}$, from which we can obtain a categorical distribution $(\mathcal{P}^{\mathbf{z}}_t)^{(k)}$ over a support $\mathbf{z}^{(k)}:=\{z_b \: | \: |b|=k+1\}$ using the Cram\'{e}r projection.
We can construct a Hamming-weight preserving unitary that maps $\ket{(\boldsymbol{s}_T|s_t,k)}\otimes\ket{01}$ to $\ket{s_t}\otimes\ket{(\mathbf{z}^{(k)}|s_t)}$, where: $\ket{(\mathbf{z}^{(k)}|s_t)}$ encodes the probabilities of $(\mathcal{P}^{\mathbf{z}}_t)^{(k)}$. Measuring this state with the observable $O^{\mathbf{z}}_t$ as defined in the proof of Proposition~\ref{prop:cramer} satisfies the claim for a specific state $s_t$ and Hamming-weight $k$. Since the quantum states $\ket{(\mathbf{z}^{(k)}|s_t)}$ have different Hamming-weights, it is possible to construct a unitary operation that performs this mapping for all Hamming-weights. Similarly, as in Proosition~\ref{prop:cramer}, we can construct a unitary that performs this mapping for all states $s_t\in\mathcal{S}_t$.
\end{proof}

Regarding the above proof, there are two points to note. First, in order to calculate expectations on different subsets of future trajectories grouped by Hamming-weight, we added two ancilla qubits to satisfy the requirement of the Cram\'{e}r projection for a support of size at least $2$. This requirement is not satisfied for trajectories of Hamming-weight $0$ or $n(T-t)$ without the addition of these ancilla qubits. As a result, the Hamming-weight of the measured quantum states has been shifted by $+1$. Second, if the subsets of trajectories with Hamming-weights $0$ and $n(T-t)$ are empty, then only $n(T-t)$ qubits are needed instead of $n(T-t)+2$.

If we have Hamming-weight preserving unitaries, as described in Proposition~\ref{prop:cramer-bis}, that produce the correct expectation on every subset, then we can obtain the overall expectation of the distributional value function. By loading the superposition over all future states $\ket{(\boldsymbol{s}_T|s_t)}$ using the transition oracles $\{U_{t'}^p\}_{t'=t}^{T-1}$ and applying the unitary $U_t^\pi$ from Proposition~\ref{prop:cramer-bis}, we can calculate directly the overall expectation without having to reconstruct the expectations per subspace and compute the overall expectation classically. Hamming-weight preserving unitaries act only inside the subspace to map each $\ket{(\boldsymbol{s}_T|s_t,k)}\otimes\ket{01}$ to $\ket{(\mathbf{z}|s_t,k+1)}$. Therefore, the overall expectation matches the expectation of the value function when we apply $U_t^\pi$ to $\ket{(\boldsymbol{s}_T|s_t)}\otimes\ket{01}$, resulting in the state $\ket{(\mathbf{z}|s_t)}$ with density $\rho^{\pi}_t(\mathbf{z}|s_t))$. In other words, we have
\begin{align*}
    \text{\normalfont Tr}[O^{\mathbf{z}}_t \rho^{\pi}_t(\mathbf{z}|s_t))]  & = \sum_{k=0}^{n(T-t)}  \text{{\normalfont Tr}}[P_{k+1}\rho_t^{\pi}(\mathbf{z}|s_t)]\times  \text{{\normalfont Tr}}[O^{\pi}_t\rho_t(\mathbf{z}|s_t,k+1)] \\
    & = \sum_{k=0}^{n(T-t)}  \text{{\normalfont Tr}}[P_{k}\rho_t(\boldsymbol{s}_T|s_t)] \times \text{{\normalfont Tr}}[O^{\pi}_t\rho_t(\boldsymbol{s}_T|s_t,k)] \\
      & = \text{{\normalfont Tr}}[O^{\pi}_t\rho_t(\boldsymbol{s}_T|s_t)].
\end{align*}

We have shown that for any policy $\pi$, a Hamming-weight preserving unitary exists at each time-step $t$, which can predict the expectation of the exponentiated returns on every subset of future paths accurately, as well as the overall expectation over all future paths by projecting the output states onto an observable $O_t^{\mathbf{z}}$ independent of $\pi$. We will now use quantum compound neural networks to parametrize these unitaries and provide reinforcement learning algorithms to learn the expectation on every subset before using the overall expectation to improve the policy $\pi$.

\subsubsection{Compound Neural Networks for Deep Hedging}
\label{subsec:compound-circuit-construction}

In this section, we will discuss a general approach to constructing quantum neural networks using Hamming-weight preserving unitaries that can be used in Quantum Deep Hedging for quantum environments. We will explain how these networks can be used for both the policy and value function before providing algorithms for training them in Section~\ref{subsec:algorithms-quantum-e}. At each time-step, we design a quantum neural network that acts on $n(T+1)+2$ qubits. It takes the quantum encoding $\ket{s_t}$ as input on the first $n(t+1)$ qubits, applies the transition oracles on $n(T+1)$ qubits, and uses a compound neural network with two additional ancilla qubits (initialized to $\ket{01}$) to predict a value within some range $(o_{\min},o_{\max})$. First, we will design the observable and then present the architecture of the quantum circuit for the value function, which can also be applied to the policy by adjusting the bounds accordingly.

To construct the observable, we need to define a support that covers the valid range of values $(o_{\min}, o_{\max})$ for every subset of Hamming-weights of size at least $2$. For the value function, this range corresponds to the possible values of the exponentiated returns, which is crucial for the validity of the Cram\'{e}r projection. For the policy, the support should only cover the range of valid actions. To design the support, we draw inspiration from~\cite{bellemare_distributional_2017} and define the support for each subset of Hamming-weight $k+1$ as having size $N_k=\binom{n(T-t)+2}{k+1}$ with uniformly spaced atoms. We can express this support as
$$ O^{\mathbf{z}}_{t} = I^{\otimes n(t+1)}\otimes \sum_{k=0}^{n(T-t)}\sum_{|b|=k+1} \big(o_{\min} + (o_{\max}-o_{\min})\frac{i_b}{N_k-1}\big)\ket{b}\bra{b},$$
where $i_b$ ranges from $0$ to $N_k-1$ and indexes all the quantum states with Hamming-weight $k+1$. 

To construct the unitary, we begin with $\ket{s_t}\otimes\ket{0}^{\otimes n(T-t)}\otimes\ket{01}$ and apply the model to create a superposition over all future trajectories $\ket{(\boldsymbol{s}_T|s_t)}\otimes\ket{01}$. We can rewrite this superposition as the tensor decomposition $\ket{s_t}\otimes\ket{\boldsymbol{M}_{t+1},\dots,\boldsymbol{M}_T}\otimes\ket{01}$, since non-zero probability paths necesarily start from $s_t$. Next, we conditionally apply a compound layer to the last $n(T-t)+2$ qubits, controlled by the first $n(t+1)$ qubits. Instead of learning $|\mathcal{S}_t|=\mathcal{O}(2^{nt})$ parameters for each $s_t$, we control each of the first $n(t+1)$ qubits individually and learn a maximum number of $2n(t+1)$ parameters, two sets per control qubit. 

Since the control qubits are always in a computational basis state, they can be removed and the appropriate unitary can be chosen classically for each past state. Thus, our compound neural network $U_t(\boldsymbol{\omega}_t)$ with parameters $\boldsymbol{\omega}_t$ is written as the product of $n(t+1)$ compound layers, each with parameters $\boldsymbol{\omega}_t^{i,0}$ or $\boldsymbol{\omega}_t^{i,1}$ depending on whether the $i$-th qubit is in state $\ket{0}$ or $\ket{1}$.

After applying the unitary $U_t(\boldsymbol{\omega}_t)$ as described earlier, we obtain the quantum state with density $$\rho_t^{\boldsymbol{\omega}_t}(\mathbf{z}|s_t) = \ket{(\mathbf{z}|s_t)}\bra{(\mathbf{z}|s_t)},$$ where $\ket{(\mathbf{z}|s_t)} = U_t(\boldsymbol{\omega}_t)\ket{(\boldsymbol{s}_T|s_t)}\otimes\ket{01}$. Measuring the observable $O_t^{\mathbf{z}}$ in this state results in a value within the range $(o_{\min}, o_{\max})$. Furthermore, when using this circuit for the value function, we can apply $U_t(\boldsymbol{\omega}_t)$ to a specific Hamming-weight in order to predict a value only for that subset of trajectories.

In summary, we have described how to construct a quantum neural network for a given time-step $t$ using the environment's model. These quantum neural networks will be different for different environments and we provide a concrete example in Section~\ref{subsec:results-quantum-environment} (see Figure~\ref{ansatz-controlled}).
In the next section, we will use these networks to model the value function, adjust the same compound neural network such that it can be used for the policy and train the parameters of both networks using an actor-critic algorithm.

\subsubsection{Quantum Reinforcement Learning Methods for Quantum Environments}
\label{subsec:algorithms-quantum-e}

We present an actor-critic approach to Quantum Deep Hedging, which is specifically designed for quantum environments. To represent the policy and the value function, we use compound neural networks, which were introduced in the previous section. For each time-step $t$, we use a compound neural network with $n(t+1)$ layers, where each layer is controlled by one of the first $n(t+1)$ qubits and acts on the remaining $n(T-t)+2$ qubits. We use the Brick architecture with logarithmic depth for each layer, resulting in an overall depth of $\mathcal{O}(nt\log n(T-t))$.

The value QNN $v$ is parameterized by $\boldsymbol{\omega}:=\{\boldsymbol{\omega}_t\}_{t=0}^{T}$, where $\boldsymbol{\omega}_t$ contains all the parameters used at time-step $t$. Each layer's parameters, $\boldsymbol{\omega}_t^i$, can be further split depending on the possible values of the $i$-th control qubit. We construct the observable $O^{\mathbf{z}}_t$ as defined in Section~\ref{subsec:compound-circuit-construction} on a support $\mathbf{z}$ that covers the bounds on the exponentiated return function. We assume knowledge of $\mathbf{z}$, which can be computed classically. In this case, the value QNN maps the state $s_t$ to
$$ v^{\boldsymbol{\omega}_t}_t(s_t):=-\frac{1}{\lambda} \log \text{\normalfont Tr}[O^{\mathbf{z}}_t\rho^{\boldsymbol{\omega}_t}_t(\mathbf{z}|s_t) ].$$
Here, $\rho^{\boldsymbol{\omega}_t}_t(\mathbf{z}|s_t):=\ket{(\mathbf{z}|s_t)}\bra{(\mathbf{z}|s_t)}$, and $\ket{(\mathbf{z}|s_t)} = U_t(\boldsymbol{\omega}_t)\ket{(\boldsymbol{s}_T|s_t)}\otimes\ket{01}$ is the output of the value QNN when applied to $s_t$. Similarly, we define the policy network $\pi$ with parameters $\boldsymbol{\phi}:=\{\boldsymbol{\phi}_t\}_{t=0}^{T}$. For each time-step, we define an observable $O^{\mathbf{a}}_t$ on a different support $\mathbf{a}$ that covers the range of valid actions (typically $[0,1]$) using our approach described in Section~\ref{subsec:compound-circuit-construction} such that
$$ \pi^{\boldsymbol{\phi}_t}_t(s_t):= \text{\normalfont Tr}[O^{\mathbf{a}}_t\rho^{\boldsymbol{\phi}_t}_t(\mathbf{a}|s_t) ],$$
where $\rho^{\boldsymbol{\phi}_t}_t(\mathbf{a}|s_t):=\ket{(\mathbf{a}|s_t)}\bra{(\mathbf{a}|s_t)}$, and $\ket{(\mathbf{a}|s_t)} = U_t(\boldsymbol{\phi}_t)\ket{(\boldsymbol{s}_T|s_t)}\otimes\ket{01}$ is the output of the policy QNN when applied to $s_t$. If there is more than one hedging instrument, we can construct one policy QNN per instrument. However, this is not necessary for the value function since it evaluates the overall policy for all instruments.

To train the value network, we use two optimization objectives: the \emph{distributional} and the \emph{expected} losses. The distributional loss $\mathcal{L}_D$ takes the Hamming-weight of the current state into account when evaluating the expected reward, while the expected loss $\mathcal{L}_E$ only considers the expected reward at time $t$. Specifically, $\mathcal{L}_D(\boldsymbol{\omega})$ is defined as
$$\mathcal{L}_D(\boldsymbol{\omega}) := \mathbb{E}_{s_t,k}\Big[\big(\text{\normalfont Tr}[O^{\mathbf{z}}_t \rho^{\boldsymbol{\omega}_t}_t(\mathbf{z}|s_t,k+1))] - \exp{(-\lambda R^\pi_t(\boldsymbol{s}_T))}\big)^2\big|s_t,k],$$
and $\mathcal{L}_E(\boldsymbol{\omega})$ is defined as
$$\mathcal{L}_E(\boldsymbol{\omega}) := \mathbb{E}_{s_t}\Big[\big(\text{\normalfont Tr}[O^{\mathbf{z}}_t \rho^{\boldsymbol{\omega}_t}_t(\mathbf{z}|s_t))] - \exp{(-\lambda R^\pi_t(\boldsymbol{s}_T))}\big)^2\Big].$$

After updating the value parameters $\boldsymbol{\omega}$, we use them to build estimates of the value function and then update the policy parameters $\boldsymbol{\phi}$. Using the value estimates, we update the policy to minimize the loss, adapted from~\cite{murray_deep_2022}, defined as
\begin{align*}
    \mathcal{L}(\boldsymbol{\phi}) &:= \mathbb{E}_{s_t}\Big[ \frac{1}{\lambda} \exp\big(-\lambda (r_t(s_t,\pi^{\boldsymbol{\phi}_t}_t(s_t)) + v_{t+1}^{\boldsymbol{\omega}_{t+1}}(\boldsymbol{s}_{t+1}))\big) \big | s_t \Big] \\ & := \mathbb{E}_{s_t}\Big[ \frac{1}{\lambda} \exp\big(-\lambda r_t(s_t,\pi^{\boldsymbol{\phi}_t}_t(s_t)) \times \text{\normalfont Tr}[O^{\mathbf{z}}_{t+1}\rho^{\boldsymbol{\omega}_{t+1}}_t(\mathbf{z}|\boldsymbol{s}_{t+1}) ] \big| s_t \Big].
\end{align*}

\begin{algorithm}[t!]
	\caption{Expected Actor-Critic Deep Hedging with Compound Neural Networks}
	\label{algorithm:expected-deep-hedging}
	\begin{algorithmic}
		\STATE {\bfseries input} Policy QNN $\pi$, Value QNN $v$.
		\STATE {\bfseries hyperparameters} Number of episodes per training step $N$.
		\STATE Initialize policy and value QNNs with parameters $\{\boldsymbol{\phi}_{t}\}_{t=0}^{T}, \{\boldsymbol{\omega}_{t}\}_{t=0}^{T}$.
		\WHILE{True}
		\FOR{episode $i=1$ {\bfseries to} $N$}
		\FOR{time-step $t=0$ {\bfseries to} $T$}
		\STATE Compute action $a^i_t:=\text{\normalfont Tr}[O^{\mathbf{a}}_t\rho^{\boldsymbol{\phi}_{t}}_t(\mathbf{a}|s_t^i) ]$. 
		\STATE Take action $a^i_t$ and receive total reward $r^i_t := r_t^+ - r_t^-$.
		\ENDFOR
		\FOR{time-step $t=T$ {\bfseries to} $0$}
		\STATE Compute total cumulative return: 
		$\widetilde{R}_t^i = \sum_{t'=t}^T r^i_{t'}$.
		\ENDFOR
		\ENDFOR
		\STATE Update value parameters $\boldsymbol{\omega}$ with gradient descent to minimize: 
		$$\widetilde{\mathcal{L}}(\boldsymbol{\omega}) =  \frac{1}{N}\sum_{i=1}^{N}\sum_{t=0}^{T} \big( \text{\normalfont Tr}[O^{\mathbf{z}}_t \rho^{\boldsymbol{\omega}_t}_t(\mathbf{z}|s^i_t))] - \exp(-\lambda\widetilde{R}_t^i) \big)^2.$$
		\STATE Update policy parameters $\boldsymbol{\phi}$ with gradient descent to minimize:
		$$\widetilde{\mathcal{L}}(\boldsymbol{\phi}) =  \frac{1}{N}\sum_{i=1}^{N}\sum_{t=0}^{T}\frac{1}{\lambda}\exp( -\lambda r_t^i) \times \text{\normalfont Tr}[O^{\mathbf{z}}_{t+1} \rho^{\boldsymbol{\omega}_{t+1}}_t(\mathbf{z}|s^i_{t+1}))] \big). $$
		\ENDWHILE
		\STATE {\bfseries output} Policy parameters $\boldsymbol{\phi}$.
	\end{algorithmic}
\end{algorithm}

The training procedure for our approach is outlined in Algorithms~\ref{algorithm:expected-deep-hedging}~and~\ref{algorithm:distributional-deep-hedging}. At each iteration, we generate $N$ trajectories $\{s_t^i\}_{t=0}^T$ and use the policy QNN to compute the corresponding sequence of actions. Using this sequence of actions, we can compute the cumulative return for each episode and for each time-step and we use them to update the value network. When using the expected loss, we update the value parameters $\boldsymbol{\omega}$ such that we predict this cumulative return in expectation. In the distributional case, we need to compute the Hamming-weight of the future trajectory and update the parameters such that we predict the expectation for only that subspace. Once the value estimates are updated, we can use them to update the policy parameters $\boldsymbol{\phi}$.

\subsubsection{Properties of Quantum Deep Hedging}

Predicting the performance of any particular deep learning algorithm is difficult due to the complexity of the models used, and the often unpredictable behavior of the non-convex optimization that must be performed for training. It is however possible to examine some desirable global properties of the system that indicate (but do not guarantee) good performance. The structure of the presented Quantum Deep Hedging framework leads to some of these global properties, when specifically instantiated with Hamming-weight preserving unitaries (as in Section~\ref{sec:distributional-value}). Some of these properties have been hinted at throughout the text, and we summarize them here:
\newpage
\begin{algorithm}[H]
	\caption{Distributional Actor-Critic Deep Hedging  with Compound Neural Networks}
	\label{algorithm:distributional-deep-hedging}
	\begin{algorithmic}
		\STATE {\bfseries input} Policy QNN $\pi$, Value QNN $v$.
		\STATE {\bfseries hyperparameters} Number of episodes per training step $N$.
		\STATE Initialize policy and value QNNs with parameters $\{\boldsymbol{\phi}_{t}\}_{t=0}^{T}, \{\boldsymbol{\omega}_{t}\}_{t=0}^{T}$.
		\WHILE{True}
		\FOR{episode $i=1$ {\bfseries to} $N$}
		\FOR{time-step $t=0$ {\bfseries to} $T$}
		\STATE Compute action $a^i_t:=\text{\normalfont Tr}[O^{\mathbf{a}}_t\rho^{\boldsymbol{\phi}_t}_t(\mathbf{a}|s_t^i) ]$. 
		\STATE Take action $a^i_t$ and receive total reward $r^i_t := r_t^+ - r_t^-$.
		\ENDFOR
		\FOR{time-step $t=T$ {\bfseries to} $0$}
		\STATE Compute total cumulative return: 
		$\widetilde{R}_t^i = \sum_{t'=t}^T r^i_{t'}$.
		\STATE Compute Hamming-weight $k^{i}_t:= |s_T^i|-|s_t^i|$.
		\ENDFOR
		\ENDFOR
		\STATE Update value parameters $\boldsymbol{\omega}$ with gradient descent to minimize
		$$\widetilde{\mathcal{L}}(\boldsymbol{\omega}) =  \frac{1}{N}\sum_{i=1}^{N}\sum_{t=0}^{T} \big( \text{\normalfont Tr}[O^{\mathbf{z}}_t \rho^{\boldsymbol{\omega}_t}_t(\mathbf{z}|s^i_t,k^{i}_t+1))] - \exp(-\lambda\widetilde{R}_t^i) \big)^2.$$
		\STATE Update policy parameters $\boldsymbol{\phi}$ with gradient descent to minimize
		$$\widetilde{\mathcal{L}}(\boldsymbol{\phi}) =  \frac{1}{N}\sum_{i=1}^{N}\sum_{t=0}^{T}\frac{1}{\lambda}\exp( -\lambda r_t^i) \times \text{\normalfont Tr}[O^{\mathbf{z}}_{t+1}\rho^{\boldsymbol{\omega}_{t+1}}_t(\mathbf{z}|s^i_{t+1}) ] \big). $$
		\ENDWHILE
		\STATE {\bfseries output} Policy parameters $\boldsymbol{\phi}$.
	\end{algorithmic}
\end{algorithm}
\begin{itemize}
    \item \textbf{Expressivity:} The central question for any learning algorithm is whether the parameterized models used are expressive enough to capture target models of interest. The ``universality'' of models such as deep neural networks has been a driving force in their adoption and utility. In our algorithms we do not use models that are universal in that they can express any quantum operation, however we show that they are expressive enough to capture the quantities of interest, which in our case is the true distribution of the value function. The primary challenge is that the distribution of the value function is on an unknown and potentially changing support. We show in Proposition~\ref{prop:cramer} that our model that uses a fixed support and general parameterized unitaries on $m$-qubits can approximate the true distribution with error decaying exponentially in $m$. In Proposition~\ref{prop:cramer-bis}, we specialize this result to the case where the value function distribution is constant on the Hamming-weight subspaces and we correspondingly use a fixed support with Hamming-weight preserving parameterized unitaries. 

    \item \textbf{Generalization:} The number of possible futures in a deep-hedging environment grows exponentially with the time horizon $T$. In practice, our learning algorithm can only use a limited number of episodes (polynomial in $T$). We must therefore investigate the out-of-sample performance or generalization of our algorithm in this setting. This can however be guaranteed in our setting where we use the Quantum Compound Neural Networks. Our parameterized models consist of $\mathcal{O}(T)$ such networks, each on $\mathcal{O}(T)$ qubits. From the definition in Section~\ref{subsec:quantum-compound-nns}, each of the neural networks has $\mathcal{O}(T^2)$ parameters. As a consequence of results due to Caro and Datta~\cite{caro2020pseudo}, the pseudo-dimension of our parameterized model is polynomial in $T$. Therefore $\text{poly}(T)$ episodes suffice to ensure that empirical risk minimization over our sample converges with high probability to the true optimal expressible model over the whole distribution of futures.

    \item \textbf{Trainability:} Finally we consider whether the task of optimizing the parameters for our model can be performed efficiently. The associated optimization problem is non-convex and thus training convergence cannot be guaranteed. We can show however that in our setting, the well-known ``barren plateau'' problem~\cite{mcclean_barren_2018} does not arise. Each Quantum Compound Neural Network that we use is on $\mathcal{O}(T)$ qubits and has $\mathcal{O}(T)$ depth. Furthermore the loss function we measure can be constructed as a function of measurements in the computational basis (corresponding to a measuring the vector of observables $Z$ on each qubit $i$). We initialize the parameters of the model as normal random variables with variance $\mathcal{O}(1/T)$. Theorem~\ref{thm:gaussian_init_compound} ensures that the gradients decay only polynomially with the time horizon $T$.
\end{itemize}

\section{Results}
\label{sec:applications}

In the previous sections we introduced quantum methods for Deep Hedging, which use quantum orthogonal and compound neural networks within policy-search and actor-critic based reinforcement learning algorithms. In this section we present results of hardware experiments evaluating our methods on classical and quantum-accesible market environments.

 We benchmarked our models for both classical and quantum environments using three different methods: simulating our quantum models on classical hardware assuming perfect quantum operations, simulating them on classical emulators that model the noise for quantum hardware, and applying our quantum models directly on the $20$ qubit trapped-ion quantum processors Quantinuum H1-1, H1-2~\cite{h1}. Note that because orthogonal layers are efficiently simulatable classically, we can perform simulations for up to $64$ qubits, while for the compound architectures that use the entire exponential space , we only simulated layers with up to 12 qubits. 

\begin{table}[t!]
\centering
\begin{tabular}{lccc}
\toprule[1.5pt]
\multirow{2}{*}{Model} & \multicolumn{2}{c}{Utility} & Number of \\
& Without costs & With costs & parameters \\ \midrule[1.5pt]
Feed-forward (Classical) & $-2.868$ & $-5.064$ & $881$ \\
Feed-forward (Pyramid) & $-2.873$ & $-5.048$ & $521$ \\
Feed-forward (Butterfly) & $-2.874$ & $-5.043$ & $257$ \\
\midrule
Recurrent (Classical) & $-2.933$ & $-5.075$ & $881$ \\
Recurrent (Pyramid) & $-2.939$ & $-5.102$ & $521$ \\
Recurrent (Butterfly) & $-2.931$ & $-4.854$ & $257$ \\
\midrule
LSTM (Classical) & $-2.853$ & $-4.743$ & $569$ \\
LSTM (Pyramid) & $-2.856$ & $-4.755$ & $457$ \\
LSTM (Butterfly) & $-2.879$ & $-4.787$ & $217$ \\
\midrule
Transformer (Classical) & $-2.865$ & $-4.713$ & $1905$ \\
Transformer (Pyramid) & $-2.876$ & $-4.806$ & $1305$ \\
Transformer (Butterfly) & $-2.861$ & $-4.822$ & $865$ \\
\bottomrule[1.5pt]
\end{tabular}
\caption{Comparison of expected utilities without and with transaction costs for models with classical and orthogonal layers using exact simulation over 256 paths and 30 trading days, including the number of trainable parameters.}
\label{tab:part-1-classical-sim}
\end{table}

 In all experiments, the parameters of all the quantum compound neural networks were initialized using Gaussian initialization, and the training  for all quantum neural networks was performed in exact classical simulation. In the following subsections, we give the details of the results. 
        
\subsection{Classical Market Environment}
\label{subsec:results-classical-environment}

In the first part of our experiments, we consider Quantum Deep Hedging as described in Section~\ref{subsec:quantum-deep-hedging-classical-environment} in classical market environments. We considered the environment from~\cite{buehler_deep_2019,buehler_deep_2019-1}  where the authors used Black-Scholes model to simulate the market state and evaluate hedging strategies. In this setup, the underlying asset is modeled using Geometric Brownian Motion (GBM), which is commonly used in finance to model stock prices. 

A GBM is a continuous-time stochastic process $\boldsymbol{B}_t$ described by a stochastic differential equation
$$ d\boldsymbol{B}_t = \mu \boldsymbol{B}_t\,dt + \sigma \boldsymbol{B}_t\,d\boldsymbol{W}_t, $$
where $\mu \in \mathbb{R}$ is the percentage drift and $\sigma \in \mathbb{R^+}$ is the percentage volatility.  $\boldsymbol{W}_t$ corresponds to Brownian motion and thus $d\boldsymbol{W}_t \sim \mathcal{N}(0, dt)$.

For simulations, we assume one calendar year to be one unit of time increment and therefore set $dt=1/252$ assuming 252 trading days in a calendar year. The market state $s_t$ is represented by the sequence of past and actual market observations $\{M_{t'}\}_{t'=0}^{t}$, where $M_{t'}$ is the stock price at time-step $t'$. We used a European short call option with a strike price of $K=S_0$  as the instrument to be hedged. The time horizon was set to 30 trading days with daily rebalancing, and the percentage drift ($\mu$) for the GBM was set to 0 and the percentage volatility ($\sigma$) was set to 0.2. Proportional transaction costs were utilized with a proportionality constant of 0.01. The training dataset comprised of $9.6 \times 10^4$ samples, whereas the testing dataset consisted of $2.4 \times 10^4$ samples. We compared Feed-forward, Recurrent, LSTM, and Transformer models, 
 constructed using the framework described in Section~\ref{subsec:quantum-deep-hedging-classical-environment}. Here, the input sequence $(M_0, M_1, \dots, M_T) \in \mathbb{R}^{(t+1)}$ corresponds to the mid-market price of the underlying equity. 
The outputs $a_t^\pi\in\left[0,1\right]$ correspond to the model's delta for that time-step. The Feed-forward model is constructed using the \emph{Feed Forward} architecture. Both Recurrent and LSTM models are built using the \emph{Recurrent} architectures, where  in the Recurrent model the hidden state passed onto the subsequent time-step is fixed to be the output of the previous time-step, i.e. the model's position on the hedging instrument at the previous time-step. The Transformer model we used in this work is constructed by adding the attention mechanism on top of the \emph{Feed Forward} architecture. 

\begin{table}[t!]
\centering
\begin{tabular}{lccc}
\toprule[1.5pt]
\multirow{2}{*}{Model} & \multicolumn{2}{c}{Utility} & Number of \\
& Simulator & Emulator & circuits \\ \midrule[1.5pt]
Feed-forward (Butterfly) & $-5.041$ & $-5.155$ & $960$ \\
Recurrent (Butterfly) & $-5.006$ & $-5.333$ & $960$ \\
LSTM (Butterfly) & $-4.809$ & $-4.866$ & $3840$ \\
Transformer (Butterfly) & $-4.846$ & $-5.176$ & $2880$ \\
\bottomrule[1.5pt]
\end{tabular}
\caption{Comparison of exact simulation and Quantinuum H1-1 emulator results for orthogonal layer models, evaluating expected utilities with transaction costs over 32 paths and 30 trading days, and showing the number of circuits emulated.}
\label{tab:hard-emul-part-1}
\end{table}

 \paragraph{Exact Simulations.} To evaluate the behavior of quantum orthogonal neural networks, all four architectures (Feed-forward, Recurrent, LSTM, Transformer) were compared, both with classical linear and quantum orthogonal layers (with Pyramid and Butterfly circuits). A feature size of $16$ was used for the linear layers in classical architectures, and $16$ qubits were used for the orthogonal layers in quantum architectures. For Feed-forward and Recurrent models each hidden layer was repeated three times within the network. The LSTM model had one hidden cell constructed using four classical linear/quantum orthogonal layers. The Transformer model had three hidden layers followed by two classical linear/quantum orthogonal layers for the attention mechanism. Parameters for all models were shared across time-steps. Noiseless classical simulations were performed for training and inference, with a batch of $256$ paths. The results are presented in Table~\ref{tab:part-1-classical-sim}. We compared the achieved utilities with and without transaction costs and the number of training parameters. We observe that quantum orthogonal neural networks (Pyramid and Butterfly) achieve performance competitive with classical neural networks while using fewer trainable parameters and this holds for environments both with and without transaction costs. For the comparison, we have used the same classical and quantum architectures with the same layer sizes and we trained with identical hyperparameters. The quantum networks showed competitive performance and used fewer parameters due to the fact that every linear layer had been replaced with an orthogonal one. Let us also note that it might be possible to achieve  parameter reduction with other classical methods (e.g. pruning). The Transformer and LSTM architectures demonstrated the highest model utilities among the studied architectures, while the quantum orthogonal Butterfly layers used fewest training parameters.

\paragraph{Hardware Emulations.}
To investigate the behavior of our quantum neural networks on current hardware, we employed Quantinuum H1-1 emulator~\cite{h1} to perform inference on our models. We kept the same environment configuration and used a batch of $32$ paths to perform inference. However, we downsized the network to a single layer as this allowed fewer circuit executions without significantly hampering model utilities. The Feed-forward and Recurrent architectures use one circuit evaluation per time-step, while the LSTM architectures use $4$ circuit evaluations per time-step, and the Transformer architectures use $3$ circuit evaluations per time-step. As the hardware architecture allowed for all-to-all connectivity, we used quantum orthogonal layers with a Butterfly circuit which enables log-depth circuits with linear number of two-qubit gates and thus are ideal for computations on near-term hardware. We used 1000 measurement shots per circuit evaluation to perform tomography over the unary basis and construct the output of each layer. The results are summarized in Table~\ref{tab:hard-emul-part-1}. The utility of the models is presented for two cases: when evaluated on a classical exact simulator and when evaluated on Quantinuum's hardware emulator. The table also summarizes the number of circuit evaluations needed to hedge 32 paths over 30 days with each model architecture. The results show that the LSTM architecture with Butterfly layers were most robust to noise as the model utilities on the simulator and emulator are relatively close. We also observed that the LSTM model achieved the highest utility on both cases. 

\begin{table}[t!]
\centering
\begin{tabular}{lcccccc}
\toprule[1.5pt]
\multirow{2}{*}{Model}   & \multirow{2}{*}{Utility} & \multicolumn{4}{c}{Terminal PnLs} \\ 
&  & Path 1 & Path 2 & Path 3 & Path 4 \\ \midrule[1.5pt]
LSTM (Classical)                     & $-2.173$  & $-2.578$ & $-1.225$ & $-1.420$ & $-2.671$  \\ 
LSTM (Butterfly – Simulation)        & $-2.176$  & $-2.586$ & $-1.194$ & $-1.439$ & $-2.671$  \\
LSTM (Butterfly – Hardware)          & $-2.194$  & $-2.610$ & $-1.284$ & $-1.488$ & $-2.658$  \\ 
\midrule
Transformer (Classical)              & $-2.167$  & $-2.563$ & $-1.219$ & $-1.411$ & $-2.673$  \\ 
Transformer (Butterfly – Simulation) & $-2.195$  & $-2.639$ & $-1.242$ & $-1.388$ & $-2.672$  \\
Transformer (Butterfly – Hardware)   & $-2.539$  & $-3.341$ & $-1.355$ & $-1.247$ & $-2.713$  \\ 
\bottomrule[1.5pt]
\end{tabular}
\caption{Comparison of exact simulation and Quantinuum H1-1 hardware results for orthogonal layer models, evaluating expected utilities and terminal PnLs with transaction costs over $4$ paths and $5$ trading days, evaluating the performance under hardware conditions.}
\label{tab:hard-exp-part-1}
\end{table}

\paragraph{Hardware Experiments.}

For our hardware experiments, we used the LSTM and Transformer models with 16-qubit Butterfly quantum circuits to perform inference on the Quantinuum H1-1 trapped-ion quantum processor~\cite{h1}. We reduced the time horizon of the GBM to 5 days and considered models with transaction costs for a batch of 4 randomly chosen paths. We used the same model size as the ones used on hardware emulators which resulted in 80 circuit executions for the LSTM model and 60 circuit executions for the Transformer model. We present inference results for a model with a classical linear layer, and a quantum model with a butterfly quantum orthogonal layer simulated and executed on the quantum hardware. The results are 
presented in Table~\ref{tab:hard-exp-part-1}. In addition to model utilities, we also list the terminal Profit and Loss (PnL) for each path for a more fine-grained comparison.
The results reveal that the LSTM architecture exhibits robustness to noise, consistent with the results obtained from the hardware emulator, as evidenced by the terminal PnL values of each path closely aligning with those of the simulations run on the hardware. Conversely, the Transformer's hardware execution demonstrates poorer performance compared to the simulation results.

\subsection{Quantum Market Environment}
\label{subsec:results-quantum-environment}

In the second part of our experiments, we utilize quantum compound neural networks in various reinforcement learning algorithms in a quantum environment. Specifically, we implement the expected and distributional actor-critic algorithms, as described in Section~\ref{subsec:quantum-deep-hedging-part-2}, and compare them to the policy-search algorithm adapted for compound neural networks. To accomplish this, we first describe how to construct a quantum environment for Quantum Deep Hedging by adapting the classical environment used in Section~\ref{subsec:results-classical-environment}. More specifically, we aim to build a quantum environment that mimics market dynamics following the Black-Scholes model, as described by a GBM.

To encode the dynamics of the Brownian motion in a quantum environment, we can use the fact that Brownian motions can be seen as the limit of a discrete random walk. Specifically, we can use a sequence of $nT$ independent and identically distributed Bernoulli random variables $\boldsymbol{b}_1,\boldsymbol{b}_2,\dots, \boldsymbol{b}_{nT}$ with mean $1/2$ to approximate $\boldsymbol{W}_T$, where $T$ is the maturity and $n$ is a hyperparameter that determines the precision of the approximation. Using this property, we can provide a discrete quantum environment for the Black-Scholes model, and approximate the price $\boldsymbol{B}_t$ at some time-step $t$ by $\boldsymbol{b}_1, \boldsymbol{b}_2, \dots, \boldsymbol{b}_{nt}$ as follows:
$$\boldsymbol{B}_t \approx  B_0 \times \exp\Big((\mu-\frac{\sigma^2}{2})t + \frac{\sigma}{\sqrt{n}}\sum_{k=1}^{nt} (2\boldsymbol{b}_k-1)\Big).
$$

\begin{figure}[t!]
\centering
\includegraphics[width=0.9\textwidth]{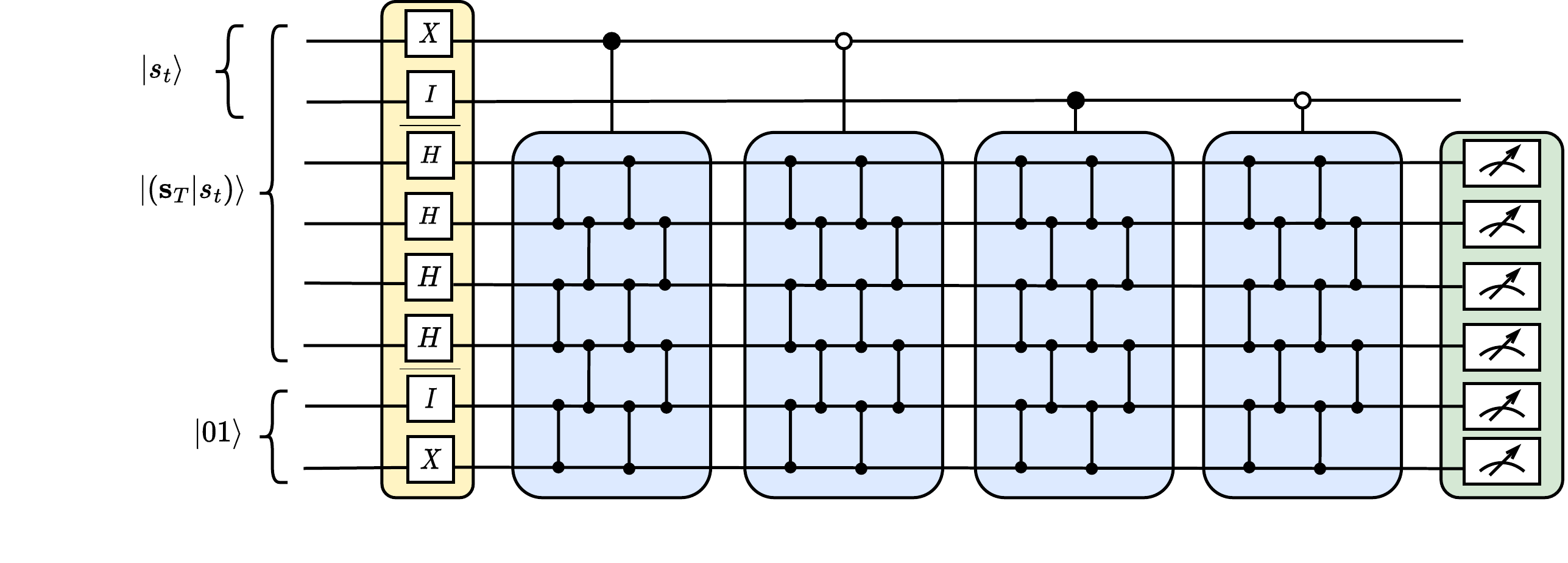}
\caption{The quantum compound neural network using $\mathcal{O}(T)+2$ qubits for the Black-Scholes model. For loading the data (in yellow), first $t$ qubits are used to encode past jumps of the market state. The next $\mathcal{O}(T-t)$ qubits encode the transition oracles which in the case of Black-Scholes corresponds to an equal superposition over all possible future jumps. The last two qubits are ancilla and are used to encode the state $|01\rangle$. Next for the unitaries (in blue), we used a architecture controlled on past market state. Based on the direction of jump, a different compound layer constructed using Brick architecture (Figure~\ref{fig:orthogonal-layers}) is applied on $T-t+2$ qubits. Finally, each of the last $T-t+2$ qubits is measured (in green) independently. For $t=0$, the input state is an equal superposition of all possible future jumps followed by one unitary over $T+2$ qubits without control. As described in Section~\ref{subsec:compound-circuit-construction} the control qubits are always in a computational basis state. Thus, they can be removed for efficient hardware implementation, and the appropriate parameters for unitary acting on $T-t+2$ qubits can be chosen classically for each past state.}
\label{ansatz-controlled}
\end{figure}

To obtain a sample $B_t$ of $\boldsymbol{B}_t$, we sample $nt$ Bernoulli variables $b_1, \dots, b_{nt}$, i.e., $n$ Bernoulli variables per day. Thus, we define the encoding of the market observation $M_t$ for a time-step $t>0$ as $\ket{M_t}:=\ket{b_{n(t-1)+1}\dots b_{nt}}$, which contains all the jumps between $t$ and $t+1$ and that can be encoded using $n$ qubits. We define the encoding of the market state $s_t$ at time-step $t$ as the history of all previous jumps, i.e., $\ket{s_t}:=\ket{b_1b_2\dots b_{nt}}$, from which we retrieve the price. Loading this quantum state can be done using a $1$-depth circuit made with at most $nt$ Pauli-$X$ gates acting on $nt$ qubits. Note that the number of qubits required to encode $s_t$ here is $nt$ and not $n(t+1)$ as in the general case, since the price at time-step $0$ is fixed. For every time-step $t$, the transition model $p_t$ can be oracularized by applying $n$ Hadamard gates on an additional $n$ qubits. The different transition oracles can be applied in parallel to build a superposition of all future trajectories and obtain$$\ket{(\boldsymbol{s}_T|s_t)} = \ket{s_t}\otimes\sum_{b\in\{0,1\}^{n(T-t)}} \frac{1}{2^{n(T-t)/2}}\ket{b}.$$

In our experimental setup, we chose $n=1$ as the number of Bernoulli variables per day to approximate Brownian motion. We retained the same GBM parameters as in the classical environment ($\mu=0$ and $\sigma=0.2$), but set the instrument maturity to $10$ days. Due to the limitations of simulating circuits with up to $12$ qubits for compound architectures, we adjusted the time step increment to $30$ trading days, instead of the usual $252$. This means each time-step $t$ should be changed to $t/30$ for the approximate GBM simulations, allowing us to capture short-term stock price fluctuations while maintaining the overall GBM behavior. The final payoff is a European short call option with a strike price of $k=1$. We also investigated cases with and without a transaction cost proportional to $\epsilon=0.002$.

\begin{table}[htp!]
\centering
\begin{tabular}{lcc}
\toprule[1.5pt]
\multirow{2}{*}{Algorithm} &
\multicolumn{2}{c}{Utility} \\
& Without costs & With costs\\ \midrule[1.5pt]
Policy-search  & $-4.064$ & $-4.639$\\
Expected actor-critic & $-4.193$ & $-4.668$ \\
Distributional actor-critic & $-3.875$ & $-4.424$ \\
\bottomrule[1.5pt]
\end{tabular}
\caption{Comparison of compound neural networks trained using different algorithms with exact simulation, evaluating expected utilities over $16$ paths and $10$ trading days, both without and with transaction costs.}
\label{tab:part-2-classical-sim}
\end{table}

\newpage
\begin{table}[htp!]
    \centering
    \begin{tabular}{lccc}
    \toprule[1.5pt]
    \multirow{2}{*}{Algorithm} & \multicolumn{2}{c}{Utility} & Number of \\
    & Simulator & Emulator & circuits \\ \midrule[1.5pt]
    Policy-search  & $-4.257$ & $-4.277$ & $160$ \\
    Expected actor-critic & $-4.528$ & $-4.531$ & $160$ \\ 
    Distributional actor-critic & $-4.185$ & $-4.180$ & $160$ \\
    \bottomrule[1.5pt]
    \end{tabular}
    \caption{Comparison of compound neural networks trained using different algorithms with simulation and Quantinuum H1-1 emulator results, evaluating expected utilities with transaction costs over 16 paths and 10 trading days.}
    \label{tab:hard-emul-10-days}
\end{table}

We utilized the compound neural networks from Section~\ref{subsec:compound-circuit-construction} to represent the policy in the policy-search algorithm and both the policy and value in the actor-critic algorithms. We employed the Huber loss and scaled the value function to prevent exploding gradients. The algorithms were trained using classical simulations of the quantum circuits for $2000$ steps with Adam optimizers, employing $3$ random seeds and a batch of $16$ generated episodes per training step. We selected the best parameters from these runs for a random selection of $16$ paths and reported the inference results.

\paragraph{Exact Simulations.} Here, the performance of the algorithms was evaluated through exact simulation on classical hardware using the Brick architecture with logarithmic depth per block for training the compound neural networks. The results, presented in Table~\ref{tab:part-2-classical-sim}, showed that policies trained using a distributional actor-critic algorithm yielded better utilities for this particular example. These results align with the findings by Lyle et al.~\cite{lyle_comparative_2019}, where minimizing a distributional loss led to better policies.

\paragraph{Hardware Emulations.} We investigated the performance of the algorithms in presence of hardware noise by running inference on the Quantinuum H1-1 emulator~\cite{h1} over 16 randomly chosen paths. To accommodate today's hardware limitations, the depth of circuits with large depth was reduced by using a fixed depth per block instead of logarithmic depth per block. We compared the inference results of the hardware emulator with exact simulation. Results presented in Table~\ref{tab:hard-emul-10-days} show that quantum compound neural networks are noise-resilient, with similar utilities demonstrated between classical simulation and hardware emulation. Furthermore, our results show that the distributional policies outperformed the expected policies, and the expected policies outperformed the policies trained using the policy-search Deep Hedging algorithm.

\begin{table}[ht]
    \centering
    \resizebox{\linewidth}{!}{
    \begin{tabular}{lcccccccccc}
    \toprule[1.5pt]
    \multirow{2}{*}{Algorithm}   & \multirow{2}{*}{Utility} & \multicolumn{8}{c}{Terminal PnLs} \\ 
    &  & Path 1 & Path 2 & Path 3 & Path 4 & Path 5 & Path 6 & Path 7 & Path 8 \\ \midrule[1.5pt]
    Black Scholes & $-4.884$ & $-4.602$ & $-5.373$ & $-4.614$ & $-4.263$ & $-5.173$ & $-5.030$ & $-5.017$  & $-4.962$ \\ \midrule
    Expected actor-critic (Simulation) & $-3.547$ & $+0.078$   & $-6.204$  & $-0.203$  & $+0.967$   & $-6.768$  & $-3.071$  & $-2.984$   & $-6.689$  \\
    Expected actor-critic (Hardware) & $-3.501$  & $+0.213$  & $-6.666$  & $-0.556$  & $+1.067$   & $-6.895$  & $-2.315$  & $-2.569$   & $-6.556$ \\ \midrule
    Distributional actor-critic (Simulation) & $-3.309$  & $-1.807$  & $-8.313$  & $-3.803$ & $+1.464$   & $-2.736$  & $-1.934$  & $-2.669$   & $-3.944$  \\
    Distributional actor-critic (Hardware)  & $-3.369$  & $-1.802$  & $-8.214$  & $-3.648$  & $+1.367$   & $-2.993$  & $-2.047$  & $-2.803$   & $-4.200$  \\ \bottomrule[1.5pt]
    \end{tabular}}
    \caption{Comparison of compound neural networks trained using different algorithms with exact simulation and Quantinuum H1-1, H1-2 hardware results, evaluating expected utilities and terminal PnLs with transaction costs over $8$ paths and 10 trading days, benchmarked against the standard Black-Scholes delta-hedging model.}
    \label{tab:hard-exp-part-2}
\end{table}

\paragraph{Hardware Experiments.} In the third part of our experiments, we performed inference on Quantinuum's H1-1 and H1-2 trapped-ion quantum processors~\cite{h1} using policies trained via distributional and expected algorithms. We used a set of 8 randomly chosen paths and compared the terminal PnLs and utility in presence of transaction costs obtained via quantum hardware with exact classical simulations. The results are presented in Table~\ref{tab:hard-exp-part-2}. We also present the results from Black-Scholes delta hedge model for the setting . We note that the utility obtained from the hardware closely aligns with the emulation results, and the PnL values for the selected paths are also similar. Our study reveals that both the distributional and expected policies significantly outperformed the Black-Scholes delta hedge, with the distributional policy exhibiting the best overall performance.

\section{Discussion}

In this work we developed quantum reinforcement learning methods for Deep Hedging. These methods are based on novel quantum neural network architectures that utilize orthogonal and compound layers, and on a novel distributional actor-critic algorithm that takes advantage of the fact that quantum states and operations naturally deal with large distributions. 

There are many potential advantages to using quantum methods to enhance the capabilities of Deep Hedging algorithms. First, for neural networks with deep architectures, as is the case for time-series data, feature orthogonality can improve interpretability, help to avoid vanishing gradients and  result in faster and better training. Second, quantum compound neural networks can explore a larger dimensional optimization landscape and thus might train to more accurate models, once we ensure that barren plateau phenomena do not occur. Third, quantum neural networks are natively appropriate to be used in distributional reinforcement learning algorithms, which can lead to considerably better models, as we show for the toy example developed in this work. Finally, the quantum circuits that we need to implement in order to train competitive quantum models for Deep Hedging are rather small, since the number of qubits and depth of the quantum circuit is basically equal to the maturity time.

Note that our hardware experiments were done with a maturity of $10$ days, due to the fact that we had to simulate the training of the quantum models on a classical computer which very soon becomes infeasible. In principle, one can train directly on the quantum computer, using for example a parameter shift rule to compute the gradients~\cite{schuld_evaluating_2019}, in which case even with the current state of the quantum hardware (or with the hardware that will arrive in the next years) one can indeed train a Quantum Deep Hedging model for a maturity time of a month or more. In this case, the quantum model can no longer be simulated classically.

Moreover, we believe our quantum reinforcement learning methods have applications beyond Deep Hedging, for example for algorithmic trading or option pricing, and it would be interesting to develop specific quantum methods for such problems. Note that in these use cases the training data can be produced efficiently, removing the bottleneck of loading large amounts of data onto the quantum computer. 

The open questions regarding our work in quantum reinforcement learning are centered around three aspects. First, there is a need to expand the results regarding the trainability of the quantum neural networks proposed in this work to other settings. Second, the question of how to extend the quantum environment built for the GBM to other environments such as the Heston model studied in~\cite{murray_deep_2022} arises. Finally, there is a need to design new distributional losses that make use of temporal difference methods to learn the value functions in the Deep Hedging context. One approach to this is using the theoretical framework that allows for such design as developed in~\cite{buehler_deep_2022-1}, while another approach is using the moment matching approach as described in~\cite{nguyen-tang_distributional_2021}. Currently, the work focuses on the expectation of the value function, but it is important to consider other moments that can be matched for both the overall expectation and the expectation per subspace.
\label{sec:conclusions}

\section*{Code Availability}
The code for the numerical experiments in this paper is available upon request.

\section*{Acknowledgments}

We would like to thank Phillip Murray from JPMorgan Chase for his help with the actor-critic experiments. Special thanks are also due to Tony Uttley, Jenni Strabley, and Brian Neyenhuis from Quantinuum for their assistance on the execution of the experiments on the Quantinuum H1-1, H1-2 trapped-ion quantum processors. 

\noindent I.K. acknowledges support  by projects EPIQ (ANR-22-PETQ-0007), QUDATA (ANR-18-CE47-0010), QOTP (QuantERA ERA-NET Cofund 2022-25), HPCQS (EuroHPC 2021-24).

\section*{Disclaimer}
This paper was prepared for informational purposes with contributions from the Global Technology Applied Research center of JPMorgan Chase \& Co. This paper is not a product of the Research Department of JPMorgan Chase \& Co. or its affiliates. Neither JPMorgan Chase \& Co. nor any of its affiliates makes any explicit or implied representation or warranty and none of them accept any liability in connection with this paper, including, without limitation, with respect to the completeness, accuracy, or reliability of the information contained herein and the potential legal, compliance, tax, or accounting effects thereof. This document is not intended as investment research or investment advice, or as a recommendation, offer, or solicitation for the purchase or sale of any security, financial instrument, financial product or service, or to be used in any way for evaluating the merits of participating in any transaction.

\newpage
\bibliographystyle{quantum}
\bibliography{main}

\end{document}